\numberwithin{equation}{section}
\newtheorem{thm}{Theorem}[section]
\newtheorem{lem}[thm]{Lemma}
\newtheorem{cor}[thm]{Corollary}
\newtheorem{pro}[thm]{Proposition}
\theoremstyle{definition}{
\newtheorem{ex}[thm]{Example}
\newtheorem{rmk}[thm]{Remark}
\newtheorem{defi}[thm]{Definition}}
\newcommand{\lon}{\rightarrow}
\newcommand{\li}{\mathfrak l}
\newcommand{\g}{\frkg}
\newcommand{\huaO}{\mathcal{O}}
\newcommand{\frkg}{\mathfrak g}
\newcommand{\frkl}{\mathfrak l}
\newcommand{\dM}{\mathrm{d}}
\newcommand{\Hom}{\operatorname{Hom}}
\newcommand{\Ad}{\operatorname{Ad}}
\newcommand{\gl}{\mathfrak{gl}}
\newcommand{\ad}{\operatorname{ad}}
\begin{document}

\allowdisplaybreaks

\newcommand{\arXivNumber}{1507.04061}

\renewcommand{\PaperNumber}{014}

\FirstPageHeading

\ShortArticleName{Hom-Big Brackets: Theory and Applications}

\ArticleName{Hom-Big Brackets: Theory and Applications}

\Author{Liqiang CAI and Yunhe SHENG}
\AuthorNameForHeading{L.~Cai and Y.~Sheng}

\Address{Department of Mathematics, Jilin University, Changchun 130012, Jilin, China}
\Email{\href{mailto:cailq13@mails.jlu.edu.cn}{cailq13@mails.jlu.edu.cn}, \href{mailto:shengyh@jlu.edu.cn}{shengyh@jlu.edu.cn}}

\ArticleDates{Received July 16, 2015, in f\/inal form February 02, 2016; Published online February 05, 2016}

\Abstract{In this paper, we introduce the notion of hom-big brackets, which is a gene\-ra\-lization of Kosmann-Schwarzbach's big brackets. We show that it gives rise to a graded hom-Lie algebra. Thus, it is a useful tool to study hom-structures. In particular, we use it to describe hom-Lie bialgebras and hom-Nijenhuis operators.}

\Keywords{hom-Lie algebras; hom-Nijenhuis--Richardson brackets; hom-big brackets; hom-Lie bialgebras; hom-Nijenhuis operators; hom-$\mathcal O$-operators}

\Classification{17B70; 17B62}

\section{Introduction}

The notion of hom-Lie algebras was introduced by Hartwig, Larsson
and Silvestrov in \cite{HLS} as part of a study of deformations of
the Witt and the Virasoro algebras. In a hom-Lie algebra, the Jacobi
identity is twisted by a linear map, called the hom-Jacobi identity.
Some $q$-deformations of the Witt and the Virasoro algebras have the
structure of a hom-Lie algebra~\cite{HLS}. Because of their close relation
to discrete and deformed vector f\/ields and dif\/ferential calculus
\cite{HLS,LD1,LD2}, hom-Lie algebras were widely studied recently
\cite{AEM,MS2,MS1,Yao1,Yao2}.

The big bracket $\{\cdot,\cdot\}$ on $\wedge^\bullet(V\oplus V^*)$ is exactly the graded Poisson bracket on $T^*V[1]$. See \cite{Y1,KostantStenberg,LR} for more details. It was already clear that the big bracket was
the appropriate tool to study the theory of Lie bialgebras. Many generalizations are made for the big bracket and there are many applications, e.g., in the theory of strong homotopy bialgebras~\cite{olga}, in the theory of Poisson geometry and Lie algebroids \cite{Y3,KSpoisson_function,Y,Roytenberg}, in the theory of deformations of Courant algebroids \cite{costa,KSNijenhuis_CA}, and etc.

The purpose of this paper is to def\/ine the hom-analogue of the big bracket, i.e., the hom-big bracket, and provide a tool to study hom-structures. Since the Nijenhuis--Richardson brac\-ket~\cite{NR} on the direct sum $\oplus_k\Hom(\wedge^kV, V)$ is a part of the big bracket, f\/irst we def\/ine the hom-Nijenhuis--Richardson bracket $[\cdot,\cdot]_{\alpha}$, where $\alpha\in {\rm GL}(V)$, and show that the hom-Nijenhuis--Richardson bracket gives rise to a graded hom-Lie algebra. The hom-Nijenhuis--Richardson bracket has some good properties. On one hand, it can describe hom-Lie algebra structures, namely for $\mu\in\Hom(\wedge^2V,V)$, $[\mu,\mu]_\alpha=0$ if and only if $\mu$ satisf\/ies the hom-Jacobi identity. On the other hand, for $A,B\in\Hom(V,V)$, we have
\begin{gather*}
[A,B]_{\alpha}=\alpha A\alpha^{-1}B\alpha^{-1}-\alpha B\alpha^{-1}A\alpha^{-1}.
\end{gather*}
 This bracket \looseness=-1 is exactly the one introduced in~\cite{sheng2}, which plays an important role in the representation theory.
Then we introduce the hom-big bracket and show that it gives rise to a graded hom-Lie algebra. Moreover, it also gives rise to a purely hom-Poisson structure introduced in~\cite{LGT}.

As the f\/irst application, we def\/ine hom-Lie bialgebras using the hom-big bracket. A~Lie bialgebra~\cite{D}
 is the Lie-theoretic case of a~bialgebra: it is a set with a Lie algebra structure and a~Lie coalgebra
one which are compatible. Lie bialgebras are the inf\/initesimal
objects of Poisson--Lie groups. Both Lie bialgebras and Poisson--Lie
groups are considered as semiclassical limits of quantum groups. The
solutions of the classical Yang--Baxter equations provide examples of
Lie bialgebras. The hom-analogue of the Yang--Baxter equation and quantum groups are studied in~\cite{Yao1,Yao2}. Furthermore, hom-analogues of a Lie bialgebra are studied in two approaches recently~\cite{sheng1,Yao3}.
The hom-Lie bialgebra def\/ined here is the same as the one given in~\cite{Yao3}. As a~byproduct, we give the def\/initions of a hom-Lie quasi-bialgebra and a~hom-quasi-Lie bialgebra. We hope that they are connected with hom-quantum groups~\cite{Yao2}. They also provide a way to study the hom-analogue of Drinfeld twists.

As the second application, we def\/ine hom-Nijenhuis operators using the hom-big bracket. For a Lie algebra $(\li,[\cdot,\cdot]_\li)$, a Nijenhuis operator is
a linear map $N\colon \li\longrightarrow\li$ satisfying
\begin{gather*}
[Nx, Ny]_\li = N([Nx, y]_\li +[x, Ny]_\li - N[x,y]_\li),
\end{gather*}
which   gives a trivial deformation of Lie algebra $\li$ and plays an
important role in the study of integrability of Hamilton equations
\cite{Dorfman,Kosmann3}. In general, a 1-parameter inf\/initesimal
deformation is controlled by a 2-cocycle
$\omega\colon \wedge^2\li\longrightarrow\li$ (see \cite{NR} for more
details). In~\cite{grabowski}, the authors identif\/ied the role that
Nijenhuis operators play in the theory of contractions and
deformations of both Lie algebras and Leibniz (Loday) algebras. Nijenhuis operators on algebras other than Lie algebras, including
for $L_\infty$-algebras, Poisson structures and Courant algebroids, can be found in~\cite{costa, Nijenhuis_forms,CGM, G, KSNijenhuis_CA, Kosmann3}. In~\cite{sheng3}, a~notion of a hom-Nijenhuis operator was given. However, the hom-Nijenhuis operator def\/ined here is dif\/ferent from the existing one. Similarly, the notion of a hom-$\mathcal O$-operator is also dif\/ferent from the one given in~\cite{sheng1}. But we believe that the current def\/initions are more reasonable (see Remarks~\ref{rmk:Nijenhuis} and~\ref{rmk:O-operator}) and this justif\/ies the usage of the hom-big bracket.

 The paper is organized as follows. In Section~\ref{section2}, we recall notions of hom-Lie algebras, representations of hom-Lie algebras, hom-right-symmetric algebras, big brackets, Lie bialgebras and Nijenhuis operators. In Section~\ref{section3}, we give the def\/inition of the hom-Nijenhuis--Richardson bracket $[\cdot,\cdot]_{\alpha}$ and show that the composition gives rise to a hom-right-symmetric algebra structure (Theorem~\ref{thm:composition}). Consequently, $[\cdot,\cdot]_{\alpha}$ satisf\/ies the hom-Jacobi identity. Then we obtain a~new cohomology of a hom-Lie algebra via the hom-Nijenhuis--Richardson bracket, see~\eqref{eq:newd}. In Section~\ref{section4}, we give the def\/inition of the hom-big bracket
and show that it gives rise to a~graded hom-Lie algebra (Theorem~\ref{Graded}). In particular, it is consistent with the hom-Nijenhuis--Richardson bracket. In Section~\ref{section5}, we def\/ine a hom-Lie bialgebra using the hom-big bracket and describe it using the usual algebraic language. We also give the def\/initions of a hom-Lie quasi-bialgebra and a hom-quasi-Lie bialgebra. In Section~\ref{section6}, we def\/ine a hom-Nijenhuis operator and a hom-$\mathcal O$-operator using the hom-big bracket and study their properties.

\section{Preliminaries}\label{section2}

\subsection*{Hom-Lie algebras and hom-right-symmetric algebras}

\begin{defi}\quad
\begin{itemize}\item[(1)]
 A (multiplicative)   hom-Lie algebra is a triple $(V,[\cdot,\cdot],\alpha)$ consisting of a~vector space~$V$, a~skew-symmetric bilinear map (bracket) $[\cdot,\cdot]\colon \wedge^2V\longrightarrow
 V$ and a linear map $\alpha\colon V\lon V$ preserving the bracket, such that the following hom-Jacobi
 identity with respect to $\alpha$ is satisf\/ied
 \begin{gather*}
 [\alpha(x),[y,z]]+[\alpha(y),[z,x]]+[\alpha(z),[x,y]]=0.
 \end{gather*}

\item[(2)]A hom-Lie algebra is called a regular hom-Lie algebra if $\alpha$ is
an algebra automorphism.
 \end{itemize}
\end{defi}

\begin{defi}\label{repn}
 A representation of the hom-Lie algebra $(V,[\cdot,\cdot],\alpha)$ on
 the vector space $W$ with respect to $\beta\in\gl(W)$ is a linear map
 $\rho\colon V\longrightarrow \gl(W)$, such that for all
 $x,y\in V$, the following equalities are satisf\/ied
 \begin{gather*}
\rho(\alpha(x))\circ \beta = \beta\circ \rho(x),\\
 \rho([x,y])\circ
 \beta = \rho(\alpha(x))\circ\rho(y)-\rho(\alpha(y))\circ\rho(x).
 \end{gather*}
\end{defi}

\begin{defi}[\cite{MS2}]
A hom-right-symmetric algebra is a triple $(V,*,\gamma)$ consisting of a linear space $V$, a bilinear map $*\colon V\otimes V\rightarrow V$ and a linear map $\gamma\colon V\rightarrow V$ preserving the multiplication such that the following equality is satisf\/ied
\begin{gather*}
(x*y)*\gamma(z)-\gamma(x)*(y*z)=(x*z)*\gamma(y)-\gamma(x)*(z*y),\qquad\forall\, x, y, z\in V.
\end{gather*}
\end{defi}
Given a hom-right-symmetric algebra $(V,*,\gamma)$, def\/ine $[\cdot,\cdot]\colon \wedge^2V\longrightarrow V$ by $[x,y]=x*y-y*x$. Then $(V,[\cdot,\cdot],\gamma)$ is a hom-Lie algebra.

\subsection*{Big brackets, Lie bialgebras and Nijenhuis operators}

Let $V$ be a vector space, denote by
$\wedge^{n}(V\oplus V^{*})=\oplus_{p+q=n}(\wedge^{q+1}V^{*}\otimes \wedge^{p+1}V)$ and $\wedge^\bullet(V\oplus V^{*})=\oplus_{n=-2}^{\infty}\wedge^{n}(V\oplus V^{*})$.
We say that $u$ is of \emph{degree $|u|$}, if $u\in\wedge^{|u|}(V\oplus V^{*})$. The \emph{big bracket} $\{\cdot,\cdot\}\colon \wedge^k(V\oplus V^*)\otimes \wedge^l(V\oplus V^*)\longrightarrow \wedge^{k+l}(V\oplus V^*)$ is uniquely determined by the following properties:
\begin{itemize}\itemsep=0pt
 \item[(i)] for all $x,y\in V$, $\{x,y\}=0$;
 \item[(ii)] for all $\xi,\eta\in V^*$, $\{\xi,\eta\}=0$;
 \item[(iii)] for all $x\in V, ~\xi\in V^*$, $\{x,\xi\}=\xi(x)$;
 \item[(iv)] it is graded skew-symmetric, i.e., for all $e_1\in\wedge^k(V\oplus V^*)$, $e_2\in\wedge^l(V\oplus V^*)$, we have
 \begin{gather*}
 \{e_1,e_2\}=-(-1)^{kl}\{e_2,e_1\};
 \end{gather*}
 \item[(v)] for all $e\in\wedge^k(V\oplus V^*)$, $\{e,\cdot\}$ is a graded derivation, i.e., for all $e_1\in\wedge^l(V\oplus V^*)$ and $e_2\in\wedge^s(V\oplus V^*)$, we have
 \begin{gather*}
 \{e,e_1\wedge e_2\}=\{e,e_1\}\wedge e_2+(-1)^{kl}e_1\wedge\{e,e_2\}.
 \end{gather*}
\end{itemize}

A \emph{Lie bialgebra} is a triple $(V,\mu,\Delta)$, where $\mu\colon \wedge^2V\longrightarrow V$ and $\Delta\colon V\longrightarrow\wedge^2 V$ are linear maps (viewed as elements in $\wedge^2V^*\otimes V$ and $V^*\otimes \wedge^2 V$ respectively) such that\footnote{In the sequel, we use the same notation for a~multilinear map and the corresponding tensor.}
\begin{gather*}
\{\mu+\Delta,\mu+\Delta\}=0.
\end{gather*}
This is equivalent to
\begin{itemize}\itemsep=0pt
\item $(V,\mu)$ is a Lie algebra;
\item $(V^*,\Delta^*)$ is a Lie algebra;
\item $\Delta(\mu(x,y))=\ad^\mu_x\Delta(y)-\ad^\mu_y\Delta(x)$, where $\ad^\mu$ is the action of the Lie algebra $(V,\mu)$ on $\wedge^2V$ given by $\ad^\mu_x(y\wedge z)=\mu(x,y)\wedge z+y\wedge\mu(x,z)$.
\end{itemize}

Using the big bracket, a \emph{Nijenhuis operator} $N\colon V\longrightarrow V$ (viewed as an element in $V^*\otimes V$) on a Lie algebra $(V,\mu)$ can be described by
\begin{gather*}
\{N,\{N,\mu\} \} -\{N^2 ,\mu\}=0.
\end{gather*}

\section{The hom-Nijenhuis--Richardson bracket}\label{section3}

\looseness=-1
The Nijenhuis--Richardson bracket is a graded Lie algebra structure on the space of alternating multilinear forms of a vector space to itself, introduced by Nijenhuis and Richardson \cite{NR}. In this section, we introduce the notion of hom-Nijenhuis--Richardson brackets and study their properties.

Let $V$ be a vector space. For any $k\geq 0$, denote by $C^k(V, V)=\Hom(\wedge^kV, V)$ and $C(V,V)=\bigoplus^{\infty}_{k=0}C^k(V, V)$.
We say that $P\in C(V,V)$ is of \emph{degree $k$}, if $P\in C^{k+1}(V,V)$.

\begin{defi}
Let $\alpha\in {\rm GL}(V)$ be an invertible linear map. The hom-Nijenhuis--Richardson bracket
\begin{gather*}
[ \cdot, \cdot ]_{\alpha} \colon \ C^{k+1}(V, V) \times C^{l+1}(V, V) \rightarrow C^{k+l+1}(V, V)
\end{gather*}
is def\/ined by
\begin{gather}\label{hom-NR bracket}
[P, Q]_{\alpha} = P\circ Q- (-1)^{kl} Q\circ P, \qquad\forall \, P\in C^{k+1}(V, V), \quad Q\in C^{l+1}(V, V),
\end{gather}
where the composition $\circ$ is given by
\begin{gather}
 (P\circ Q) (x_1,\dots,x_{k+l+1})=\sum_{\sigma \in (l+1, k)\text{-unshuf\/f\/les}} \operatorname{sgn}(\sigma )\alpha P\big( \alpha^{-1} Q\big( \alpha^{-1} x_{\sigma (1) }, \dots, \alpha^{-1} x_{\sigma (l+1)}\big) ,\nonumber\\
 \hphantom{(P\circ Q) (x_1,\dots,x_{k+l+1})=\sum_{\sigma \in (l+1, k)\text{-unshuf\/f\/les}}}{}
 \alpha^{-1} x_{\sigma (l+2) }, \dots, \alpha^{-1} x_{\sigma(k+l+1)}\big), \label{hom-NR brackethalf}
\end{gather}
in which $\alpha^{-1}$ is the inverse of $\alpha$.
\end{defi}

\begin{ex} For all $N\in C^{1}(V,V)$ and $y\in C^{0}(V,V)$, we have
\begin{gather*}
[N,y]_{\alpha}=(\Ad_{\alpha}N)(y),
\end{gather*}
where $\Ad_\alpha$ is the adjoint map, i.e., $\Ad_\alpha N=\alpha N\alpha^{-1}$.
\end{ex}

\begin{ex}{\rm For all $P,~Q \in C^1(V, V)=\frkg\frkl(V)$, we have
\begin{gather*}
[P, Q]_{\alpha}= \alpha P\alpha^{-1}Q\alpha^{-1}- \alpha Q\alpha^{-1}P\alpha^{-1}.
\end{gather*}
In \cite{sheng2}, the authors showed that $(\gl(V),[\cdot,\cdot]_\alpha,\Ad_\alpha)$ is a hom-Lie algebra, which plays important roles in the representation theory of hom-Lie algebras. More precisely, any representation of a hom-Lie algebra $\g$ on $V$ can be realized as a homomorphism from~$\g$ to the hom-Lie algebra $(\gl(V),[\cdot,\cdot]_\alpha,\Ad_\alpha)$.}
 \end{ex}

The Jacobi identity can be described by the Nijenhuis--Richardson bracket. Similarly, the hom-Jacobi identity can be described by
the hom-Nijenhuis--Richardson bracket.

\begin{lem}\label{lem:mu}
Let $\mu\in C^2(V,V)$ and $\alpha\in {\rm GL}(V)$. Then $(V,\mu,\alpha)$ is a hom-Lie algebra if and only if $\Ad_\alpha\mu=\mu$ and
\begin{gather*}
[\mu,\mu]_{\alpha}=0.
\end{gather*}
\end{lem}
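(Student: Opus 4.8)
The plan is to unwind the bracket $[\mu,\mu]_\alpha$ into the single composition $\mu\circ\mu\in C^3(V,V)$, evaluate it explicitly, and recognize the result as (a multiple of) the left-hand side of the hom-Jacobi identity for $(V,\mu,\alpha)$; the extra hypothesis $\Ad_\alpha\mu=\mu$ is precisely what makes the two coincide.

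First I would dispose of the multiplicativity clause. By definition, $(V,\mu,\alpha)$ is a hom-Lie algebra exactly when (a)~$\alpha$ preserves $\mu$ and (b)~the hom-Jacobi identity with respect to $\alpha$ holds. Now ``$\alpha$ preserves $\mu$'' means $\alpha\mu(x,y)=\mu(\alpha x,\alpha y)$ for all $x,y\in V$, and applying $\alpha^{-1}$ and relabelling $x,y$ shows this is literally the condition $\Ad_\alpha\mu=\mu$, where $(\Ad_\alpha\mu)(x,y)=\alpha\mu(\alpha^{-1}x,\alpha^{-1}y)$. So it remains to prove: \emph{granted $\Ad_\alpha\mu=\mu$}, the hom-Jacobi identity holds if and only if $[\mu,\mu]_\alpha=0$.

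Since $\mu$ has degree $1$, formula \eqref{hom-NR bracket} gives $[\mu,\mu]_\alpha=\mu\circ\mu-(-1)^{1\cdot 1}\mu\circ\mu=2\,\mu\circ\mu$, so $[\mu,\mu]_\alpha=0$ is equivalent to $\mu\circ\mu=0$. To compute $\mu\circ\mu$ I would apply \eqref{hom-NR brackethalf} to arguments of the special shape $\alpha y_1,\alpha y_2,\alpha y_3$ (legitimate since $\alpha\in\mathrm{GL}(V)$, so $\mu\circ\mu$ vanishes identically iff it vanishes on all such triples); this cancels the $\alpha^{-1}$'s acting on the inputs. The sum then runs over the three $(2,1)$-unshuffles of $\{1,2,3\}$, namely the permutations with $\sigma(1)<\sigma(2)$, carrying signs $+,-,+$. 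Using the skew-symmetry of $\mu$ in its outermost argument to bring $y_i$ to the front, and then the identity $\alpha\mu(u,v)=\mu(\alpha u,\alpha v)$ (which is again $\Ad_\alpha\mu=\mu$) to absorb the one surviving $\alpha^{-1}$, one is left with
\begin{gather*}
(\mu\circ\mu)(\alpha y_1,\alpha y_2,\alpha y_3)=-\big(\mu(\alpha y_1,\mu(y_2,y_3))+\mu(\alpha y_2,\mu(y_3,y_1))+\mu(\alpha y_3,\mu(y_1,y_2))\big).
\end{gather*}
The expression in parentheses is exactly the left-hand side of the hom-Jacobi identity for $(V,\mu,\alpha)$, so $\mu\circ\mu=0$ if and only if that identity holds. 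Together with the previous paragraph, this yields the lemma.

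The calculation is routine; the two points that need care are the enumeration and signs of the $(2,1)$-unshuffles, and the fact that the hypothesis $\Ad_\alpha\mu=\mu$ is used twice---once to reduce ``$(V,\mu,\alpha)$ is a hom-Lie algebra'' to a statement about the hom-Jacobi identity alone, and once to clear the leftover power of $\alpha$ from $\mu\circ\mu$. In particular the hypothesis cannot be dropped: without it, $[\mu,\mu]_\alpha=0$ becomes a genuinely weaker, $\alpha$-twisted version of the hom-Jacobi identity and does not even force $\alpha$ to preserve $\mu$.
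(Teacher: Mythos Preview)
Your proof is correct and follows essentially the same route as the paper: both arguments identify $\Ad_\alpha\mu=\mu$ with the multiplicativity condition, then evaluate $[\mu,\mu]_\alpha$ (equivalently $2\,\mu\circ\mu$) at $\alpha$-shifted inputs $(\alpha y_1,\alpha y_2,\alpha y_3)$, use multiplicativity to clear the leftover $\alpha^{-1}$, and recognize the cyclic sum from the hom-Jacobi identity. Your write-up is more explicit about the logical structure (in particular the two distinct uses of $\Ad_\alpha\mu=\mu$ and why the hypothesis cannot be dropped), but the computation is the same as the paper's.
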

\begin{proof}
It is straightforward to see that $\Ad_\alpha\mu=\mu$ is equivalent to that
\begin{gather}\label{eq:mor}
 \mu(\alpha(x),\alpha(y))=\alpha\mu(x,y).
\end{gather}
By \eqref{eq:mor}, \eqref{hom-NR bracket} and \eqref{hom-NR brackethalf}, we have
\begin{gather*}
[\mu,\mu]_{\alpha}(\alpha x_1,\alpha x_2.\alpha x_3)=2\big(\mu(\mu(x_1,x_2),\alpha x_3)+\mu(\mu(x_2,x_3),\alpha x_1)+\mu(\mu(x_3,x_1),\alpha x_2)\big).
\end{gather*}
Thus, $[\mu,\mu]_{\alpha}=0$ is equivalent to the hom-Jacobi identity.
\end{proof}

About the properties of the composition~$\circ$ given by~\eqref{hom-NR brackethalf}, we have
\begin{thm}\label{thm:composition}
With the above notations, $(C(V,V),\circ,\Ad_{\alpha})$ is a hom-right-symmetric algebra, i.e., for all $P,Q,W\in C(V,V)$, the following equalities hold
\begin{gather}
\label{eq:right1}\Ad_{\alpha}(P\circ Q) = \Ad_{\alpha}P\circ\Ad_{\alpha}Q, \\
\label{eq:right2}(P\circ Q)\circ \Ad_{\alpha}W-\Ad_{\alpha}P\circ(Q\circ W) = (P\circ W)\circ \Ad_{\alpha}Q-\Ad_{\alpha}P\circ (W\circ Q),
\end{gather}
where $\Ad_\alpha\colon C^{k+1}(V, V)\longrightarrow C^{k+1}(V, V)$ is given by
\begin{gather}\label{eq:Ad}
 \Ad_\alpha P(x_1,\dots, x_{k+1})=\alpha P\big(\alpha^{-1}(x_1),\dots,\alpha^{-1}(x_{k+1})\big).
\end{gather}
\end{thm}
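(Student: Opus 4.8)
The plan is to verify both identities directly from the definitions \eqref{hom-NR brackethalf} and \eqref{eq:Ad}, reducing everything to a careful bookkeeping of where the twists $\alpha^{\pm 1}$ land on the arguments. The key observation that makes this tractable is a \emph{normalization trick}: since $\alpha$ is invertible, the identity \eqref{eq:right2} (and likewise \eqref{eq:right1}) need only be checked after feeding in arguments of the form $\alpha^{N}x_1,\dots,\alpha^{N}x_{n}$ for a suitable large power $N$, which cancels the innermost $\alpha^{-1}$'s and turns each composition into an ordinary (untwisted) Nijenhuis--Richardson-type composition conjugated by a single power of $\alpha$. So the first step is to record the ``evaluation lemma'': for $P\in C^{k+1}(V,V)$, $Q\in C^{l+1}(V,V)$,
\begin{gather*}
(P\circ Q)\big(\alpha^{2}x_1,\dots,\alpha^{2}x_{k+l+1}\big)
=\alpha\sum_{\sigma\in(l+1,k)\text{-unshuffles}}\operatorname{sgn}(\sigma)\,
P\big(Q(x_{\sigma(1)},\dots,x_{\sigma(l+1)}),\alpha x_{\sigma(l+2)},\dots,\alpha x_{\sigma(k+l+1)}\big),
\end{gather*}
and similarly $\Ad_\alpha P(\alpha x_1,\dots,\alpha x_{k+1})=\alpha^{2}P(x_1,\dots,x_{k+1})$, i.e.\ $\Ad_\alpha P = \alpha\circ P\circ(\alpha^{-1})^{\otimes(k+1)}$ as maps.

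For \eqref{eq:right1}: I would compute $\Ad_\alpha(P\circ Q)$ and $\Ad_\alpha P\circ \Ad_\alpha Q$ each on arbitrary arguments, push all the $\alpha$'s around using multiplicativity of $\alpha$ (it commutes with nothing a priori, but the point is simply that $\alpha^{-1}$ applied to an $\alpha$ is trivial and $\alpha$ is linear so it pulls out of sums and of the multilinear slots after one absorbs it into the inverse twists), and check the two expressions agree term by term over the same set of unshuffles. This is the easy identity and essentially amounts to the statement that conjugation by $\alpha$ is an algebra endomorphism of $(C(V,V),\circ)$; it is the hom-analogue of the trivial fact that conjugation is multiplicative.

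For \eqref{eq:right2}, the main identity, the strategy is to evaluate both sides on $\alpha^{3}x_1,\dots,\alpha^{3}x_{k+l+s+1}$ (or the appropriate power that clears all nested inverses at once), apply the evaluation lemma twice to each of the four triple-composition terms, and reduce to showing that the resulting expression — which is now a signed sum over iterated unshuffles of the \emph{untwisted} associator $P(Q(-),-)$-type terms — satisfies the classical (non-hom) right-symmetric / pre-Lie identity for the Nijenhuis--Richardson composition. In other words, once the twists are stripped away by the normalization, \eqref{eq:right2} becomes exactly the well-known fact that the Nijenhuis--Richardson composition on $\oplus_k\Hom(\wedge^kV,V)$ is a graded right-symmetric (pre-Lie) product, whose associator $(P\circ Q)\circ W-P\circ(Q\circ W)$ is graded-symmetric in $Q$ and $W$. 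The one point requiring genuine care — and the main obstacle — is checking that the $\alpha$-factors sitting on the ``outer'' arguments (the $\alpha x_{\sigma(l+2)},\dots$ produced by the inner composition) are distributed consistently on both sides so that the term-by-term matching with the classical identity is exact; in particular one must track that the single surviving overall factor of $\alpha$ and the ``depth-one'' twists $\alpha x_i$ appear in the same pattern in $(P\circ Q)\circ\Ad_\alpha W$ as in $\Ad_\alpha P\circ(Q\circ W)$. Once that combinatorial matching is set up, each of the two sides of \eqref{eq:right2}, after normalization, is $\alpha$ applied to the corresponding side of the classical pre-Lie identity for $(P,Q,W)$ viewed with their $\alpha$-shifted arguments, and the equality follows. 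I would close by noting that \eqref{eq:right1} and \eqref{eq:right2} are precisely the axioms of a hom-right-symmetric algebra with twist $\Ad_\alpha$, completing the proof.
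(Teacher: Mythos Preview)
Your overall strategy---reduce the hom-right-symmetric identity to the classical graded pre-Lie identity for the ordinary Nijenhuis--Richardson composition by evaluating on $\alpha$-shifted inputs---is sound and is genuinely different from the paper's proof, which simply expands all four triple compositions and cancels by hand. Your route is more conceptual: it explains \emph{why} \eqref{eq:right2} holds rather than just verifying it.

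However, your ``evaluation lemma'' is wrong as stated. A direct computation from \eqref{hom-NR brackethalf} gives
\[
(P\circ Q)\big(\alpha^{2}x_1,\dots,\alpha^{2}x_{k+l+1}\big)
=\alpha\!\!\sum_{\sigma\in(l+1,k)\text{-unsh.}}\!\!\operatorname{sgn}(\sigma)\,
P\big(\alpha^{-1}Q(\alpha x_{\sigma(1)},\dots,\alpha x_{\sigma(l+1)}),\,\alpha x_{\sigma(l+2)},\dots,\alpha x_{\sigma(k+l+1)}\big),
\]
not $P\big(Q(x_{\sigma(1)},\dots),\alpha x_{\sigma(l+2)},\dots\big)$: there is an $\alpha^{-1}$ in front of $Q$ and $\alpha$'s inside $Q$ that do not cancel unless $\Ad_\alpha Q=Q$, which is not assumed. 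The clean statement is $P\circ Q=\Ad_\alpha\big(P\bullet Q'\big)$, where $\bullet$ is the classical Nijenhuis--Richardson composition and $Q':=\alpha^{-1}\!\circ Q$ (post-composition). With this correction one finds, by applying the identity twice,
\[
\big((P\circ Q)\circ\Ad_\alpha W\big)(\alpha^{3}x)=\alpha^{2}\big((P\bullet Q')\bullet W'\big)(\alpha x),
\qquad
\big(\Ad_\alpha P\circ(Q\circ W)\big)(\alpha^{3}x)=\alpha^{2}\big(P\bullet(Q'\bullet W')\big)(\alpha x),
\]
and likewise with $Q,W$ interchanged. Hence the two sides of \eqref{eq:right2}, evaluated on $\alpha^{3}x$, are $\alpha^{2}$ applied to the two sides of the classical graded pre-Lie identity for the triple $(P,Q',W')$---not $(P,Q,W)$ as you wrote. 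Since the classical identity holds for all triples, this still closes the argument; but your description of the reduction is inaccurate and your stated lemma would not get you there. If you carry out this corrected version, the ``main obstacle'' you flag (matching the outer $\alpha$-factors) disappears: the bookkeeping is absorbed entirely into the passage $Q\mapsto Q'$, $W\mapsto W'$.

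Compared with the paper, your (corrected) argument trades a long explicit cancellation for a one-line reduction to a known result, at the cost of introducing the auxiliary maps $Q',W'$; the paper's direct expansion is self-contained but hides the underlying reason for the identity.
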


\begin{proof}
 For all $P\in C^{k+1}(V,V)$, $Q\in C^{l+1}(V,V)$, $W\in C^{m+1}(V,V)$ and $x_{1},\dots,x_{k+l+m+1}\in V$, by \eqref{hom-NR brackethalf}, we have
\begin{gather*}
 \Ad_{\alpha}(P\circ Q)(x_1, \dots , x_{k+l+1})\\
 \qquad{}
 = \sum_{\sigma \in (l+1, k)\text{-unshuf\/f\/les} } \operatorname{sgn}(\sigma )\alpha^{2} P\big( \alpha^{-1} Q\big( \alpha^{-2} x_{\sigma (1) }, \dots, \alpha^{-2} x_{\sigma (l+1)}\big) ,\\
 \hphantom{qquad{} = \sum_{\sigma \in (l+1, k)\text{-unshuf\/f\/les} }}{}
 \alpha^{-2} x_{\sigma (l+2) }, \dots, \alpha^{-2} x_{\sigma(k+l+1)}\big)
 \\
\qquad{}
 = \sum_{\sigma \in (l+1, k)\text{-unshuf\/f\/les} } \operatorname{sgn}(\sigma )
 \alpha\Ad_{\alpha}(P)\big( \alpha^{-1}\Ad_{\alpha}(Q)\big( \alpha^{-1} x_{\sigma (1) }, \dots, \alpha^{-1} x_{\sigma (l+1)}\big) , \\
\hphantom{\qquad{}= \sum_{\sigma \in (l+1, k)\text{-unshuf\/f\/les} }}{}
 \alpha^{-1} x_{\sigma (l+2) }, \dots, \alpha^{-1} x_{\sigma(k+l+1)}\big) \\
\qquad{}
=(\Ad_{\alpha}(P)\circ\Ad_{\alpha}(Q))(x_1, \dots , x_{k+l+1}),
\end{gather*}
which implies that \eqref{eq:right1} holds.

Moreover, we have
\begin{gather*}
 ( (P\circ Q)\circ \Ad_{\alpha}W-\Ad_{\alpha}P\circ(Q\circ W)-(P\circ W)\circ \Ad_{\alpha}Q\\
\qquad\quad {}+\Ad_{\alpha}P\circ (W\circ Q) )(x_{1},\dots,x_{k+l+m+1}) \\
\qquad{}=\sum_{\sigma\in(m+1,k+l)\text{-unshuf\/f\/les}}\operatorname{sgn}(\sigma)
\alpha(P\circ Q)\big(W\big(\alpha^{-2}x_{\sigma(1)},\dots,\alpha^{-2}x_{\alpha(m+1)}\big),\\
\hphantom{\qquad{}=\sum_{\sigma\in(m+1,k+l)\text{-unshuf\/f\/les}}}{}
\alpha^{-1}x_{\sigma(m+2)},\dots,\alpha^{-1}x_{\sigma(k+l+m+1)}\big)
\\
\qquad\quad{}
-\sum_{\sigma\in(l+m+1,k)\text{-unshuf\/f\/les}}\operatorname{sgn}(\sigma)
\alpha^{2}P\big(\alpha^{-2}(Q\circ W)\big(\alpha^{-1}x_{\sigma(1)},\dots,\alpha^{-1}x_{\sigma(l+m+1)}\big),\\
\hphantom{\qquad{}-\sum_{\sigma\in(l+m+1,k)\text{-unshuf\/f\/les}}}{}
\alpha^{-2}x_{\sigma(l+m+2)},\dots,\alpha^{-2}x_{\sigma(k+l+m+1)}\big) \\
\qquad\quad{}
-\sum_{\sigma\in(l+1,k+m)\text{-unshuf\/f\/les}}\operatorname{sgn}(\sigma)
\alpha(P\circ W)\big(Q\big(\alpha^{-2}x_{\sigma(1)},\dots,\alpha^{-2}x_{\sigma(l+1)}\big),\\
\hphantom{\qquad{}=\sum_{\sigma\in(m+1,k+l)\text{-unshuf\/f\/les}}}{}
\alpha^{-1}x_{\sigma(l+2)},\dots,\alpha^{-1}x_{\sigma(k+l+m+1)}\big)\\
\qquad\quad{}
+\sum_{\sigma\in(l+m+1,k)\text{-unshuf\/f\/les}}\operatorname{sgn}(\sigma)
\alpha^{2}P\big(\alpha^{-2}(W\circ Q)\big(\alpha^{-1}x_{\sigma(1)},\dots,\alpha^{-1}x_{\sigma(l+m+1)}\big),\\
\hphantom{\qquad{}=\sum_{\sigma\in(m+1,k+l)\text{-unshuf\/f\/les}}}{}
\alpha^{-2}x_{\sigma(l+m+2)},\dots,\alpha^{-2}x_{\sigma(k+l+m+1)}\big)\\
\qquad{}
=\sum_{\sigma\in(m+1,k+l)\text{-unshuf\/f\/les}}\sum_{\tau\in(l+1,k)\text{-unshuf\/f\/les}}(-1)^{l+1}
\operatorname{sgn}(\sigma)\operatorname{sgn}(\tau) \\
\qquad\quad{}
\alpha^{2}P\big(\alpha^{-1}Q\big(\alpha^{-2}x_{\tau\sigma(m+2)},\dots,\alpha^{-2}x_{\tau\sigma(m+l+2)}\big),
\alpha^{-1}W\big(\alpha^{-2}x_{\sigma(1)},\dots,\alpha^{-2}x_{\sigma(m+1)} \big), \\
\hphantom{\qquad{}=\sum_{\sigma\in(m+1,k+l)\text{-unshuf\/f\/les}}}{}
\alpha^{-2}x_{\tau\sigma(m+l+3)},\dots,\alpha^{-2}x_{\tau\sigma(k+l+m+1)}\big) \\
\qquad\quad{}
-\sum_{\sigma\in(l+1,k+m)\text{-unshuf\/f\/les}}\sum_{\tau\in(m+1,k)\text{-unshuf\/f\/les}}(-1)^{l+1}\operatorname{sgn}(\sigma)\operatorname{sgn}(\tau) \\
\qquad\quad{}
\alpha^{2}P\big(\alpha^{-1}W\big(\alpha^{-2}x_{\tau\sigma(l+2)},\dots,\alpha^{-2}x_{\tau\sigma(m+l+2)}\big),
\alpha^{-1}Q\big(\alpha^{-2}x_{\sigma(1)},\dots,\alpha^{-2}x_{\sigma(l+1)}\big),\nonumber\\
\hphantom{\qquad{}=\sum_{\sigma\in(m+1,k+l)\text{-unshuf\/f\/les}}}{}
\alpha^{-2}x_{\tau\sigma(m+l+3)},\dots,\alpha^{-2}x_{\tau\sigma(k+l+m+1)}\big)
=0.
\end{gather*}
Thus, \eqref{eq:right2} holds.
Therefore, $(C(V,V),\circ,\Ad_{\alpha})$ is a hom-right-symmetric algebra.
\end{proof}

\begin{cor}\label{cor:hom-Lie}
$(C(V,V),[\cdot,\cdot]_{\alpha}, \Ad_{\alpha})$ is a graded hom-Lie algebra, i.e., we have
\begin{gather}
\label{graded hom-Lie alg1}
\Ad_{\alpha}([P, Q]_{\alpha}) = [\Ad_{\alpha}(P), \Ad_{\alpha}(Q)]_{\alpha},\\
[\Ad_{\alpha}W, [P, Q]_{\alpha}]_{\alpha}
 = [[W, P]_{\alpha}, \Ad_{\alpha}Q]_{\alpha}+(-1)^{|W|\cdot|P|}[\Ad_{\alpha}P, [W, Q]_{\alpha}]_{\alpha}.\nonumber
\end{gather}
\end{cor}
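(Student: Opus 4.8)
The plan is to derive both displayed identities of Corollary~\ref{cor:hom-Lie}, together with graded skew-symmetry, directly from Theorem~\ref{thm:composition}, i.e.\ from the two identities \eqref{eq:right1} and \eqref{eq:right2} for the hom-right-symmetric algebra $(C(V,V),\circ,\Ad_{\alpha})$; in essence this is the graded, $\alpha$-twisted analogue of the classical fact (recalled right after the definition of hom-right-symmetric algebras in Section~\ref{section2}) that the commutator of a right-symmetric algebra satisfies the Jacobi identity. Graded skew-symmetry $[P,Q]_{\alpha}=-(-1)^{|P|\cdot|Q|}[Q,P]_{\alpha}$ is immediate from the defining formula \eqref{hom-NR bracket}, since $(-1)^{|P|\cdot|Q|}(-1)^{|P|\cdot|Q|}=1$. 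For \eqref{graded hom-Lie alg1}, observe first that $\Ad_{\alpha}$ is linear and preserves degree by \eqref{eq:Ad}; then applying $\Ad_{\alpha}$ to \eqref{hom-NR bracket} and using \eqref{eq:right1} gives $\Ad_{\alpha}[P,Q]_{\alpha}=\Ad_{\alpha}(P\circ Q)-(-1)^{|P|\cdot|Q|}\Ad_{\alpha}(Q\circ P)=\Ad_{\alpha}P\circ\Ad_{\alpha}Q-(-1)^{|P|\cdot|Q|}\Ad_{\alpha}Q\circ\Ad_{\alpha}P=[\Ad_{\alpha}P,\Ad_{\alpha}Q]_{\alpha}$, as required.

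For the hom-Jacobi identity I would expand each of the three double brackets $[\Ad_{\alpha}W,[P,Q]_{\alpha}]_{\alpha}$, $[[W,P]_{\alpha},\Ad_{\alpha}Q]_{\alpha}$ and $[\Ad_{\alpha}P,[W,Q]_{\alpha}]_{\alpha}$ by the definition \eqref{hom-NR bracket}, producing altogether twelve terms, each of the form $A'\circ(B\circ C)$ or $(B\circ C)\circ A'$, where $\{A,B,C\}$ is a signed permutation of $\{W,P,Q\}$ carrying a single $\Ad_{\alpha}$ on one of the three factors. Using \eqref{eq:right1} one moves each $\Ad_{\alpha}$ into a standard position (the Koszul signs are unchanged because $\Ad_{\alpha}$ preserves degree), so that every term reads either $\Ad_{\alpha}A\circ(B\circ C)$ or $(A\circ B)\circ\Ad_{\alpha}C$; the terms then pair off into associators $\mathrm{as}_{\alpha}(A,B,C):=(A\circ B)\circ\Ad_{\alpha}C-\Ad_{\alpha}A\circ(B\circ C)$, and \eqref{eq:right2}, which says precisely that $\mathrm{as}_{\alpha}(A,B,C)$ is symmetric under interchanging its last two arguments, forces these to cancel. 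The underlying mechanism is already visible in the ungraded case: substituting $\mathrm{as}_{\alpha}$ into the suitably Koszul-signed cyclic expression built from $[\Ad_{\alpha}W,[P,Q]_{\alpha}]_{\alpha}$, $[\Ad_{\alpha}P,[Q,W]_{\alpha}]_{\alpha}$, $[\Ad_{\alpha}Q,[W,P]_{\alpha}]_{\alpha}$, the associator contributions drop out by \eqref{eq:right2} while the remaining terms cancel cyclically in two groups of three; the asymmetric form stated in the corollary is simply the Leibniz rearrangement of this cyclic identity obtained via graded skew-symmetry.

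The only genuinely delicate step, and the one I expect to require the most care, is the Koszul-sign bookkeeping: one has to check that, after expanding and normalizing, the twelve terms really do assemble into instances of \eqref{eq:right2} with matching signs and that the leftover terms cancel with the correct signs, so that the cyclic identity rearranges into exactly the stated form with prefactor $(-1)^{|W|\cdot|P|}$. Conceptually, however, nothing beyond Theorem~\ref{thm:composition} is needed: the content of Corollary~\ref{cor:hom-Lie} is just that the graded commutator of the graded hom-right-symmetric algebra $(C(V,V),\circ,\Ad_{\alpha})$ is a graded hom-Lie algebra.
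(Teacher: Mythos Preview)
Your approach is correct and is exactly what the paper intends: Corollary~\ref{cor:hom-Lie} is stated without proof because it is the graded version of the general fact, recalled just after the definition of hom-right-symmetric algebras in Section~\ref{section2}, that the commutator of a hom-right-symmetric algebra is a hom-Lie algebra, applied to the structure $(C(V,V),\circ,\Ad_{\alpha})$ of Theorem~\ref{thm:composition}. Your derivation of \eqref{graded hom-Lie alg1} from \eqref{eq:right1} and of the hom-Jacobi identity from the associator symmetry \eqref{eq:right2} is precisely this argument spelled out in the graded setting.
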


By Lemma \ref{lem:mu} and Corollary~\ref{cor:hom-Lie}, for any hom-Lie algebra $(V,\mu,\alpha)$, there is a coboundary operator $\dM\colon C^k(V,V)\longrightarrow
C^{k+1}(V,V)$, which is given by
\begin{gather}\label{eq:newd}
\dM f=(-1)^{k+1}[\mu,f]_{\alpha},\qquad \forall\, f\in C^{k}(V,V).
\end{gather}

This formula can be easily generalized for any representation. More precisely, for any representation $\rho$ of the hom-Lie algebra $(V,\mu,\alpha)$ on $W$ with respect to $\beta$, def\/ine
 $\dM\colon \Hom(\wedge^kV,W)\longrightarrow
\Hom(\wedge^{k+1}V,W)$ by
\begin{gather*}
\dM f(x_1,\dots,x_{k+1})=\sum_{i=1}^{k+1}(-1)^{i+1}\rho(x_i)\big(f\big(\alpha^{-1}x_1,\dots,\widehat{\alpha^{-1}x_i},\dots,\alpha^{-1}x_{k+1}\big)\big) \\
\qquad{}+\sum_{i<j}(-1)^{i+j}\beta f\big(\mu(\alpha^{-2}x_i,\alpha^{-2}x_j),\alpha^{-1}x_1,\dots,\widehat{\alpha^{-1}x_i},\dots,\widehat{\alpha^{-1}x_j},\dots,\alpha^{-1}x_{k+1}\big).
\end{gather*}

\begin{thm} With above notations,
$\dM^{2}=0$. Thus, we have a well-defined cohomology.
\end{thm}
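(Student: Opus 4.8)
The plan is to reduce the statement to the case of adjoint coefficients, where $\dM^{2}=0$ follows at once from the graded hom-Jacobi identity already established. First I would treat that case: take $W=V$, let $\rho=\ad^{\mu}$ be the adjoint action $\ad^{\mu}_{x}=\mu(x,\cdot)$, and $\beta=\alpha$. A short computation from the hom-Jacobi identity and from multiplicativity $\mu(\alpha x,\alpha y)=\alpha\mu(x,y)$ shows that this is a representation in the sense of Definition~\ref{repn}, and the displayed formula for $\dM$ then specializes, for this choice, to \eqref{eq:newd}, i.e., $\dM f=(-1)^{k+1}[\mu,f]_{\alpha}$ for $f\in C^{k}(V,V)$. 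Hence $\dM^{2}f=(-1)^{k+2}(-1)^{k+1}[\mu,[\mu,f]_{\alpha}]_{\alpha}=-[\mu,[\mu,f]_{\alpha}]_{\alpha}$. Applying the graded hom-Jacobi identity of Corollary~\ref{cor:hom-Lie} with $W=P=\mu$ (which has degree $1$) and $Q=f$ gives
\[
[\Ad_{\alpha}\mu,[\mu,f]_{\alpha}]_{\alpha}=[[\mu,\mu]_{\alpha},\Ad_{\alpha}f]_{\alpha}-[\Ad_{\alpha}\mu,[\mu,f]_{\alpha}]_{\alpha},
\]
so $2[\Ad_{\alpha}\mu,[\mu,f]_{\alpha}]_{\alpha}=[[\mu,\mu]_{\alpha},\Ad_{\alpha}f]_{\alpha}$. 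By Lemma~\ref{lem:mu}, $[\mu,\mu]_{\alpha}=0$ and $\Ad_{\alpha}\mu=\mu$; dividing by $2$ we get $[\mu,[\mu,f]_{\alpha}]_{\alpha}=0$, whence $\dM^{2}f=0$.

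For a general representation $\rho$ of $(V,\mu,\alpha)$ on $W$ with respect to $\beta$, I would pass to the semidirect product. On $V\oplus W$ put the bracket $\mu_{\rho}\big((x,u),(y,v)\big)=\big(\mu(x,y),\rho(x)v-\rho(y)u\big)$ and the linear map $\alpha\oplus\beta$. Using the two axioms of Definition~\ref{repn} together with multiplicativity of $\mu$, one checks that $(V\oplus W,\mu_{\rho},\alpha\oplus\beta)$ is a multiplicative hom-Lie algebra: the $V$-component of its hom-Jacobiator is that of $\mu$, while the coefficients of $u$, $v$, $w$ in the $W$-component are each annihilated by $\rho([y,z])\beta=\rho(\alpha y)\rho(z)-\rho(\alpha z)\rho(y)$. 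Now embed $\Hom(\wedge^{\bullet}V,W)$ into $C^{\bullet}(V\oplus W,V\oplus W)$ by sending $f\colon\wedge^{k}V\to W$ to the cochain $\bar f$ that equals $f$ (valued in $W\subset V\oplus W$) on $\wedge^{k}V$ and vanishes as soon as one of its arguments lies in $W$. A direct inspection, using $\mu_{\rho}\big((x,0),(0,v)\big)=(0,\rho(x)v)$ and $(\alpha\oplus\beta)(0,v)=(0,\beta v)$, shows that the coboundary operator $\dM$ of $(V\oplus W,\mu_{\rho},\alpha\oplus\beta)$ with adjoint coefficients maps this subspace to itself and satisfies $\dM\bar f=\overline{\dM f}$, where on the right $\dM$ is the operator from the statement; indeed the two sums in the formula for $\dM\bar f$ reproduce exactly the two sums for $\dM f$. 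Therefore $\overline{\dM^{2}f}=\dM^{2}\bar f=0$ by the adjoint case, and since $f\mapsto\bar f$ is injective, $\dM^{2}f=0$.

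Within the route I favor, the residual work is precisely the verification that $\dM\bar f=\overline{\dM f}$, which comes down to matching the twisting powers $\alpha^{-1}$, $\alpha^{-2}$, $\beta$ on the various slots; but the hom-Jacobi identity of $\mu$ now enters only through the single fact that $(V\oplus W,\mu_{\rho},\alpha\oplus\beta)$ is a hom-Lie algebra, so the delicate cancellation of the $\mu\mu$-terms is subsumed in the adjoint case. The honest alternative --- a head-on expansion of $\dM(\dM f)$ into its four families of terms (two arising from $\rho\circ\rho$, one from $\rho$ applied to the $\beta f$-term, one from $\beta f$ applied to the $\rho$-term, and one from $\beta^{2}f$ applied to a double bracket in $\mu$), checking that the $\rho\rho$ terms cancel the mixed $\rho$--$\mu$ terms via $\rho([y,z])\beta=\rho(\alpha y)\rho(z)-\rho(\alpha z)\rho(y)$ and the $\mu\mu$ terms cancel via the hom-Jacobi identity, with $\rho(\alpha x)\beta=\beta\rho(x)$ and $\mu(\alpha x,\alpha y)=\alpha\mu(x,y)$ realigning the powers --- works equally well, and in either case the only genuine obstacle is the bookkeeping of these twists together with the Koszul signs produced by the unshuffles; channelling it through the semidirect product offloads that bookkeeping onto the already-proven adjoint case.
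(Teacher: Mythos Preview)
Your argument is correct, and it is genuinely different from the paper's proof, which consists of the single sentence ``By straightforward computation.''  The paper simply expands $\dM(\dM f)$ and checks the cancellations you describe in your final paragraph; you instead reduce to the adjoint case via the semidirect product, where Corollary~\ref{cor:hom-Lie} and Lemma~\ref{lem:mu} dispose of everything in two lines.  The gain of your route is conceptual clarity: the bookkeeping of signs and of the twists $\alpha^{-1}$, $\alpha^{-2}$, $\beta$ is absorbed into the single verification that $\dM\bar f=\overline{\dM f}$, and the genuine cancellation (hom-Jacobi for $\mu_{\rho}$) is invoked exactly once.  The direct computation, by contrast, works uniformly but hides the structural reason why $\dM^{2}=0$.

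One caveat worth noting: your semidirect-product step uses Corollary~\ref{cor:hom-Lie} for $(V\oplus W,\mu_{\rho},\alpha\oplus\beta)$, and that corollary, like the hom-Nijenhuis--Richardson bracket itself, is set up in the paper only for invertible twisting maps.  Definition~\ref{repn} allows $\beta\in\gl(W)$ rather than $\beta\in\mathrm{GL}(W)$, and the displayed formula for $\dM$ never uses $\beta^{-1}$, so the theorem is stated in that generality.  Your argument as written covers the case $\beta$ invertible; to close the gap you can either observe that $\dM^{2}f=0$ is a polynomial identity in the entries of $\mu$, $\rho$, $\alpha^{-1}$ and $\beta$ (hence passes from invertible $\beta$ to all $\beta$ by Zariski density), or fall back on the head-on expansion you already outlined, which needs no such hypothesis.
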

\begin{proof}
By straightforward computation.
\end{proof}

\begin{rmk}
The coboundary operator $\dM$ given above is dif\/ferent from the one given in~\cite{sheng2}. It turns out that cohomology theories are not unique for hom-Lie algebras.
\end{rmk}

\section{The hom-big bracket}\label{section4}

In this section, we introduce the notion of hom-big brackets.
Let $\alpha\colon V\rightarrow V$ be an invertible linear map, $\alpha^{-1}$ its inverse and $\alpha^{*} \colon V^{*} \rightarrow V^{*}$ its dual map.
 $\alpha$ induces a linear map from $\wedge^{p+1}V$ to $\wedge^{p+1}V$, for which we use the same notation, by
 \begin{gather*}
 \alpha(X)=\alpha(x_{0})\wedge\cdots\wedge\alpha(x_{p}),\qquad \forall\, X=x_{0}\wedge\cdots\wedge x_{p}\in\wedge^{p+1}V.
\end{gather*}
 Similarly, $(\alpha^{-1})^{*}\colon \wedge^{q+1}V^{*}\longrightarrow \wedge^{q+1}V^{*} $ is given by
\begin{gather*}
 \big(\alpha^{-1}\big)^{*}(\Xi)=\big(\alpha^{-1}\big)^{*}(\xi_{0})\wedge\cdots\wedge\big(\alpha^{-1}\big)^{*}(\xi_{q}),\qquad \forall \, \Xi=\xi_{0}\wedge\cdots\wedge\xi_{q}\in\wedge^{q+1}V^{*}.
 \end{gather*}
Def\/ine $\Ad_\alpha\colon \wedge^{q+1}V^{*}\otimes \wedge^{p+1}V\longrightarrow \wedge^{q+1}V^{*}\otimes \wedge^{p+1}V$ by
\begin{gather}\label{eq:Add}
\Ad_{\alpha}(\Xi\otimes X)=\big(\alpha^{-1}\big)^{*}(\Xi)\otimes\alpha(X).
\end{gather}
In particular, we have
\begin{gather*}
\Ad_{\alpha}(\Xi)=\big(\alpha^{-1}\big)^{*}(\Xi),\qquad
\Ad_{\alpha}(X)=\alpha(X).
\end{gather*}
\begin{rmk}
 For $X\in V$, \eqref{eq:Add} is consistent with \eqref{eq:Ad}. Thus, we use the same notation.
\end{rmk}

\begin{defi}
For an invertible linear map $\alpha\in {\rm GL}(V)$, on the graded vector space $\wedge^\bullet(V\oplus V^{*})$, we def\/ine the hom-big bracket:
\begin{gather*}
\{\cdot, \cdot \}_{\alpha}\colon \ \wedge^{n}(V\oplus V^{*})\otimes\wedge^{m}(V\oplus V^{*})\rightarrow \wedge^{n+m}(V\oplus V^{*}),
\end{gather*}
which is uniquely determined by the following properties:
\begin{itemize}\item[(i)]
for all $x, y\in V$,
$\{x, y\}_{\alpha}=0$;

\item[(ii)]
for all $\xi, \eta\in V^{*}$,
$\{\xi, \eta\}_{\alpha}=0$;

\item[(iii)]
For all $x\in V$ and $\xi\in V^{*}$,
$\{x, \xi\}_{\alpha}=\xi(\alpha^{-1}x)$;
\item[(iv)] $\{\cdot,\cdot\}_{\alpha}$ satisf\/ies the following graded-commutative relation:
\begin{gather}\label{comm}
\{u, v\}_{\alpha}=-(-1)^{|u|\cdot |v|}\{v, u\}_{\alpha};
\end{gather}

\item[(v)]
$\{~,\cdot\}_{\alpha}$ is a graded $\Ad_{\alpha}$-derivation, i.e., for all $u,v,w\in\wedge^\bullet(V\oplus V^*)$, we have
\begin{gather}\label{lb}
\{u\wedge v, w\}_{\alpha}=\Ad_{\alpha}(u)\wedge\{v, w\}_{\alpha}+(-1)^{|v|\cdot|w|}\{u,w\}_{\alpha}\wedge\Ad_{\alpha}(v).
\end{gather}
\end{itemize}
\end{defi}

In the classical case, the big bracket gives rise to a graded Lie algebra structure. Similarly, the hom-big bracket also induces a graded hom-Lie algebra structure, which is the main result in this section.

\begin{thm}\label{Graded}
With the above notations, $(\wedge^\bullet(V\oplus V^{*}),\{\cdot,\cdot\}_{\alpha},\Ad_{\alpha})$ is a graded hom-Lie algebra, i.e.,
for all $P,Q,W\in\wedge^\bullet(V\oplus V^{*})$, we have
\begin{gather}\label{G1}
\Ad_{\alpha}\{P, Q\}_{\alpha}=\{\Ad_{\alpha}(P), \Ad_{\alpha}(Q)\}_{\alpha},\\
\label{G2}
\{\Ad_{\alpha}P, \{Q, W\}_{\alpha}\}_{\alpha}=\{\{P, Q\}_{\alpha}, \Ad_{\alpha}W\}_{\alpha}+(-1)^{|P|\cdot|Q|}\{\Ad_{\alpha}Q, \{P, W\}_{\alpha}\}_{\alpha}.
\end{gather}
\end{thm}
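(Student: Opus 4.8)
The plan is to handle \eqref{G1} and \eqref{G2} in turn, in both cases reducing the identity to the degree $-1$ generators $x\in V$, $\xi\in V^{*}$ of the graded commutative algebra $\bigl(\wedge^\bullet(V\oplus V^{*}),\wedge\bigr)$ and then propagating it to arbitrary homogeneous elements by means of the $\Ad_\alpha$-twisted Leibniz rule \eqref{lb}. Throughout one uses that $\Ad_\alpha$ is a degree-preserving automorphism of $\bigl(\wedge^\bullet(V\oplus V^{*}),\wedge\bigr)$, together with the remark --- obtained by taking $u=v=1$ in \eqref{lb}, which forces $\{1,w\}_\alpha=2\{1,w\}_\alpha$ --- that $\{c,\cdot\}_\alpha=0$ for every scalar $c\in\wedge^{-2}(V\oplus V^{*})$. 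For \eqref{G1} I would induct on the number of wedge factors of $P$, keeping $Q$ fixed. The base case is a one-line check: $\{x,y\}_\alpha=\{\xi,\eta\}_\alpha=0$ are preserved trivially, while $\{x,\xi\}_\alpha=\xi(\alpha^{-1}x)$ is a scalar and $\bigl((\alpha^{-1})^{*}\xi\bigr)\bigl(\alpha^{-1}(\alpha x)\bigr)=\xi(\alpha^{-1}x)$ shows $\Ad_\alpha\{x,\xi\}_\alpha=\{\Ad_\alpha x,\Ad_\alpha\xi\}_\alpha$. For the step, write $P=P_1\wedge P_2$, expand $\{P_1\wedge P_2,Q\}_\alpha$ by \eqref{lb}, apply $\Ad_\alpha$ (which distributes over $\wedge$), invoke the inductive hypothesis on $\{P_i,Q\}_\alpha$, and run \eqref{lb} backwards on $\Ad_\alpha P_1,\Ad_\alpha P_2,\Ad_\alpha Q$ --- permissible since $\Ad_\alpha$ preserves degrees and hence leaves the signs in \eqref{lb} unchanged; the second slot is then covered by \eqref{comm}.

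For \eqref{G2} I would introduce the hom-Jacobiator
\[
J_\alpha(P,Q,W):=\{\Ad_\alpha P,\{Q,W\}_\alpha\}_\alpha-\{\{P,Q\}_\alpha,\Ad_\alpha W\}_\alpha-(-1)^{|P|\cdot|Q|}\{\Ad_\alpha Q,\{P,W\}_\alpha\}_\alpha,
\]
so that \eqref{G2} becomes $J_\alpha\equiv0$. The core step is to establish that $J_\alpha$ is, in each of its three slots, a graded $\Ad_\alpha^{2}$-twisted derivation --- for instance that for homogeneous $W_1,W_2$ one has, with signs $\varepsilon_1,\varepsilon_2$ depending only on the degrees,
\[
J_\alpha(P,Q,W_1\wedge W_2)=\varepsilon_1\,\Ad_\alpha^{2}(W_1)\wedge J_\alpha(P,Q,W_2)+\varepsilon_2\,J_\alpha(P,Q,W_1)\wedge\Ad_\alpha^{2}(W_2),
\]
and symmetrically in the first and second slots. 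To see this one expands each of the three terms of $J_\alpha(P,Q,W_1\wedge W_2)$ via \eqref{lb} applied twice (two brackets are nested), pushes every $\Ad_\alpha$ across a bracket using \eqref{G1}, and collects: the diagonal contributions --- the power $\Ad_\alpha^{2}$ appearing precisely because each of the two nested brackets produces one $\Ad_\alpha$ --- assemble into the two displayed summands, while the cross contributions (which arise only from the first and third terms of $J_\alpha$ and, after \eqref{G1}, have the shape $\{\Ad_\alpha P,\Ad_\alpha W_1\}_\alpha\wedge\{\Ad_\alpha Q,\Ad_\alpha W_2\}_\alpha$ and $\{\Ad_\alpha Q,\Ad_\alpha W_1\}_\alpha\wedge\{\Ad_\alpha P,\Ad_\alpha W_2\}_\alpha$) cancel in pairs by \eqref{comm} and the graded commutativity of $\wedge$. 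This mirrors the classical computation showing that the Jacobiator of a graded Poisson algebra is a triderivation, and it uses only \eqref{comm}, \eqref{lb}, \eqref{G1} and the algebra structure of $\wedge$ --- not \eqref{G2} --- so there is no circularity.

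Granting the triderivation property, \eqref{G2} follows at once. By multilinearity and the triderivation property in all three slots it is enough to check $J_\alpha(P,Q,W)=0$ when each of $P,Q,W$ lies in $V\oplus V^{*}$ or is a scalar. If one of them is a scalar this is immediate from $\{c,\cdot\}_\alpha=0$. If all three have degree $-1$, then every bracket of two of them is either $0$ (cases $\{x,y\}_\alpha$, $\{\xi,\eta\}_\alpha$) or a scalar (case $\{x,\xi\}_\alpha=\xi(\alpha^{-1}x)$), and bracketing anything with a scalar again gives $0$, so all three terms of $J_\alpha(P,Q,W)$ vanish. Hence $J_\alpha\equiv0$, which is \eqref{G2}.

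The one genuinely delicate point is the sign-and-twist bookkeeping in the triderivation step: one must verify that the powers of $\Ad_\alpha$ created by the two nested brackets in the three terms of $J_\alpha$, together with the extra $\Ad_\alpha$'s produced when pushing them past brackets via \eqref{G1}, match slot by slot, so that the diagonal terms really do collect into $\Ad_\alpha^{2}(W_i)\wedge J_\alpha(\cdots)$ and the cross terms really do occur with opposite signs. This is analogous to, though longer than, the cancellation already carried out in the proof of Theorem~\ref{thm:composition}; the rest of the argument is formal.
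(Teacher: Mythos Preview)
Your argument is correct, and it takes a genuinely different route from the paper's. The paper first derives an explicit closed formula \eqref{formula3} for $\{\Xi\otimes X,\Pi\otimes Y\}_\alpha$ in terms of the twisted interior product $i^{\alpha}_{x}$, proves \eqref{G1} by computing both sides with this formula and invoking Lemma~\ref{interior}, and then reduces \eqref{G2} via the Leibniz rule only as far as the ``pure'' cases $\{\{\Xi,Y\}_\alpha,\alpha(Z)\}_\alpha$ and $\{(\alpha^{-1})^{*}\Xi,\{\Pi,Z\}_\alpha\}_\alpha$, which it again verifies by explicit interior-product computation. Your approach bypasses the explicit formula entirely: you push the Leibniz reduction all the way down to degree $-1$ generators, where both identities are trivial, the key structural step being that the hom-Jacobiator $J_\alpha$ is an $\Ad_\alpha^{2}$-twisted triderivation. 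This is cleaner and more conceptual; it also makes transparent why the twist in the derivation rule is $\Ad_\alpha^{2}$ (two nested brackets, each contributing one $\Ad_\alpha$). What the paper's approach buys in exchange is the explicit formula~\eqref{formula3}, which it needs anyway for Proposition~\ref{pro:consistant} relating the hom-big bracket to the hom-Nijenhuis--Richardson bracket. One small point worth making explicit in your write-up: the graded skew-symmetry of $J_\alpha$ under permutation of its three arguments (which you use to transfer the derivation property from the third slot to the others) does hold here despite the asymmetric placement of the $\Ad_\alpha$'s, but this deserves a one-line check since it is not quite as automatic as in the untwisted case.
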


\begin{rmk}
By Theorem~\ref{Graded} and the Leibniz rule~\eqref{lb}, $(\wedge^\bullet(V\oplus V^{*}),\wedge,\{\cdot,\cdot\}_{\alpha},\Ad_{\alpha})$ is a~purely hom-Poisson algebra, a notion which was introduced by Laurent-Gengoux and Teles in~\cite{LGT}.
\end{rmk}

To prove the theorem, we need some preparations.
For all $\Xi\in\wedge^{q+1}V^{*}$, $x,x_{1},\dots,x_{q}\in V$, def\/ine the \emph{interior product}
$i_{x}^{\alpha}\colon \wedge^{q+1}V^{*}\rightarrow\wedge^{q}V^{*}$
by
\begin{gather*}
(i_{x}^{\alpha}\Xi)(x_{1},\dots,x_{q}):=\Xi\big(\alpha^{-1}x,\alpha^{-1}x_{1},\dots,\alpha^{-1}x_{q}\big).
\end{gather*}

We can get the following formulas by straightforward computations.
\begin{lem}\label{interior}
For all $\Xi\in\wedge^{\bullet}V^{*}$, $x,y,z\in V$, we have
\begin{gather*}
\big(\alpha^{-1}\big)^*(i_{x}^{\alpha}\Xi) = i_{\alpha(x)}^{\alpha}\big(\big(\alpha^{-1}\big)^*(\Xi)\big),\qquad
i_{\alpha(y)}^{\alpha}i_{z}^{\alpha}\Xi = -i_{\alpha(z)}^{\alpha}i_{y}^{\alpha}\Xi.
\end{gather*}
\end{lem}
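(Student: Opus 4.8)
The plan is to prove both identities by direct computation, evaluating each side on an arbitrary tuple of test vectors and then unwinding the defining formula for the interior product together with the pullback convention for $(\alpha^{-1})^*$ on multiforms. I will use repeatedly that $(\alpha^{-1})^*$ acts on $\wedge^q V^*$ by $\big((\alpha^{-1})^*\Phi\big)(x_1,\dots,x_q)=\Phi(\alpha^{-1}x_1,\dots,\alpha^{-1}x_q)$, which follows from the definition of $\alpha^*$ and the fact that pullback respects the pairing (a determinant computation on decomposables). Note also that the definition of $i^\alpha_x$ applies $\alpha^{-1}$ not only to the inserted argument but to every test argument as well, so I must track the $\alpha^{-1}$'s carefully in each slot.

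For the first identity, fix $\Xi\in\wedge^{q+1}V^*$ and evaluate the left-hand side on $x_1,\dots,x_q\in V$. Applying first the pullback and then the interior product gives $\big((\alpha^{-1})^*(i_x^\alpha\Xi)\big)(x_1,\dots,x_q)=\Xi\big(\alpha^{-1}x,\alpha^{-2}x_1,\dots,\alpha^{-2}x_q\big)$, the $\alpha^{-2}$ on the test arguments arising because each of them receives one $\alpha^{-1}$ from the pullback and one from $i_x^\alpha$. For the right-hand side, the operator $i^\alpha_{\alpha(x)}$ inserts $\alpha^{-1}(\alpha(x))=x$ into the first slot of $(\alpha^{-1})^*\Xi$ and applies $\alpha^{-1}$ to the test arguments, after which the pullback contributes a second $\alpha^{-1}$; unwinding yields the same $\Xi\big(\alpha^{-1}x,\alpha^{-2}x_1,\dots,\alpha^{-2}x_q\big)$. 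The cancellation $\alpha^{-1}\circ\alpha=\mathrm{id}$ in the first slot is exactly what matches the two sides.

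For the second identity, evaluate on $x_1,\dots,x_{q-1}$. The left-hand side reduces, via the insertion $\alpha^{-1}(\alpha(y))=y$ in the outer interior product followed by the inner $i^\alpha_z$, to $\Xi\big(\alpha^{-1}z,\alpha^{-1}y,\alpha^{-2}x_1,\dots,\alpha^{-2}x_{q-1}\big)$. The same reduction applied to the right-hand side (with the roles of $y$ and $z$ exchanged) produces $-\Xi\big(\alpha^{-1}y,\alpha^{-1}z,\alpha^{-2}x_1,\dots,\alpha^{-2}x_{q-1}\big)$. These differ only by a transposition of the first two arguments of $\Xi$, so the skew-symmetry of $\Xi$ supplies a factor $(-1)$ that cancels the explicit minus sign, and the two sides coincide.

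There is no genuine obstacle here; the content is purely the bookkeeping of how many copies of $\alpha^{-1}$ land in each slot. The two points that need care are: in the first identity, confirming that the two separate $\alpha^{-1}$'s coming from $i^\alpha_{\alpha(x)}$ and $(\alpha^{-1})^*$ on the right reassemble into the $\alpha^{-2}$ on the test arguments and the single $\alpha^{-1}x$ in the first slot; and in the second identity, the precise placement of $\alpha(y),\alpha(z)$ versus $y,z$ so that the antisymmetry of $\Xi$ delivers exactly the stated sign. Both are routine once the insertion slots are tracked, so the lemma follows by straightforward computation as claimed.
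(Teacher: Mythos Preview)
Your proposal is correct and matches the paper's approach: the paper merely states that the formulas follow ``by straightforward computations'' without spelling out details, and your evaluation of both sides on test vectors, tracking the $\alpha^{-1}$'s slot by slot and invoking the skew-symmetry of $\Xi$ for the second identity, is exactly the intended verification.
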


Furthermore, we have
\begin{lem}
For all $x\in V$, $\Xi=\xi_{0}\wedge\cdots \wedge\xi_{q}\in\wedge^{q+1}V^{*}$, $\Pi\in\wedge^{l+1}V^{*},~X=x_{0}\wedge\cdots \wedge x_{p}\in\wedge^{p+1}V$, $Y=y_{0}\wedge\cdots \wedge y_{k}\in\wedge^{k+1}V$, we have
\begin{gather}\label{formula1}
\{x, \Xi\}_{\alpha} = i_{x}^{\alpha}\Xi, \\
\label{formula2}
\{X, \Pi\}_{\alpha} = (-1)^{pl+p}\sum_{j=0}^{p}(-1)^{j}i_{x_{j}}^{\alpha}\Pi\otimes\alpha(x_{0}\wedge\cdots\wedge\widehat{x_{j}}\wedge\cdots\wedge x_{p}),\\
\{\Xi\otimes X, \Pi\otimes Y\}_{\alpha} = (-1)^{pl+p}\sum_{j=0}^{p}(-1)^{j}\big(\alpha^{-1}\big)^*(\Xi)\wedge i_{x_{j}}^{\alpha}\Pi\nonumber\\
\hphantom{\{X, \Pi\}_{\alpha} =}{}\qquad{}
\otimes\alpha(x_{0}\wedge\cdots\wedge\widehat{x_{j}}\wedge\cdots\wedge x_{p})\wedge\alpha(Y)\label{formula3}\\
\hphantom{\{X, \Pi\}_{\alpha} =}{} -(-1)^{pl+q}\sum_{j=0}^{k}(-1)^{j}i_{y_{j}}^{\alpha}\Xi\wedge\big(\alpha^{-1}\big)^*(\Pi)
\otimes\alpha(X)\wedge\alpha(y_{0}\wedge\cdots\wedge\widehat{y_{j}}\wedge\cdots\wedge y_{k}).\nonumber
\end{gather}
\end{lem}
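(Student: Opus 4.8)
The final statement is the lemma giving the explicit formulas \eqref{formula1}, \eqref{formula2} and \eqref{formula3} for the hom-big bracket.

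\medskip

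The plan is to verify the three formulas by induction on word length, using the defining properties (i)--(v) of $\{\cdot,\cdot\}_\alpha$, exactly as one does for the classical big bracket but keeping careful track of the twist $\alpha$. First I would establish \eqref{formula1}: for $x\in V$ and $\Xi=\xi_0\wedge\cdots\wedge\xi_q$, I argue by induction on $q$. The base case $q=0$ is property (iii): $\{x,\xi_0\}_\alpha=\xi_0(\alpha^{-1}x)=(i_x^\alpha\xi_0)$. For the inductive step, write $\Xi=\xi_0\wedge\Xi'$ with $\Xi'\in\wedge^qV^*$, apply the graded $\Ad_\alpha$-derivation rule \eqref{lb} with $u=\xi_0$, $v=\Xi'$, $w=x$ (after rewriting $\{x,\Xi\}_\alpha=-(-1)^{|x||\Xi|}\{\Xi,x\}_\alpha$ via \eqref{comm}), so that $\{\Xi,x\}_\alpha$ expands into $\Ad_\alpha(\xi_0)\wedge\{\Xi',x\}_\alpha\pm\{\xi_0,x\}_\alpha\wedge\Ad_\alpha(\Xi')$. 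Using property (ii) to kill $\{\xi_0,\xi_j\}_\alpha$-type terms, the inductive hypothesis for $\{\Xi',x\}_\alpha$, and the first identity of Lemma~\ref{interior} relating $(\alpha^{-1})^*\circ i_x^\alpha$ with $i_{\alpha(x)}^\alpha\circ(\alpha^{-1})^*$, I can reassemble the result as $i_x^\alpha(\xi_0\wedge\Xi')$; here the precise definition of $i_x^\alpha$ acting on a wedge (the usual derivation formula, twisted) has to be matched against what the bracket produces, and the $\Ad_\alpha$ that decorates every factor except the one being contracted is exactly what makes the two sides agree.

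\medskip

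Next I would prove \eqref{formula2} by induction on $p$, treating $X=x_0\wedge\cdots\wedge x_p\in\wedge^{p+1}V$ and $\Pi\in\wedge^{l+1}V^*$. The base case $p=0$ is \eqref{formula1} together with \eqref{comm} to fix the sign $(-1)^{pl+p}$. For the inductive step write $X=x_0\wedge X'$ with $X'\in\wedge^pV$ and apply \eqref{lb} with $u=x_0$, $v=X'$, $w=\Pi$; the term $\Ad_\alpha(x_0)\wedge\{X',\Pi\}_\alpha$ brings in the inductive hypothesis while $\{x_0,\Pi\}_\alpha\wedge\Ad_\alpha(X')=i_{x_0}^\alpha\Pi\otimes\alpha(X')$ supplies the $j=0$ summand; collecting signs (the $(-1)^{|v||w|}=(-1)^{p(l+1)}$ from \eqref{lb}, plus the inductive sign) produces the stated $(-1)^{pl+p}(-1)^j$. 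I should double-check that $\{X',\Pi\}_\alpha$ lands in $\wedge^lV^*\otimes\wedge^pV$ so that wedging on the left by $\Ad_\alpha(x_0)=\alpha(x_0)$ is meaningful and commutes appropriately; this is just bookkeeping of bidegrees.

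\medskip

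Finally \eqref{formula3} is obtained by writing $\Xi\otimes X=\Xi\wedge X$ (product in $\wedge^\bullet(V\oplus V^*)$) and $\Pi\otimes Y=\Pi\wedge Y$, and expanding $\{\Xi\wedge X,\Pi\wedge Y\}_\alpha$ with two applications of \eqref{lb}: first split off $\Xi$ versus $X$ on the left, then on each resulting bracket split off $\Pi$ versus $Y$ on the right (using \eqref{comm} to move things to the right slot). The $\{\Xi,\Pi\}_\alpha$ and $\{X,Y\}_\alpha$ pieces vanish by (ii) and (i), leaving precisely the two cross terms $\{X,\Pi\}_\alpha$ and $\{\Xi,Y\}_\alpha$, which I substitute from \eqref{formula2} (and its \eqref{comm}-dual). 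The $\Ad_\alpha$'s generated by \eqref{lb} decorate $\Xi$ and $Y$ (resp.\ $X$ and $\Xi$), which is exactly how $(\alpha^{-1})^*(\Xi)$, $\alpha(Y)$, $\alpha(X)$ appear in the claimed formula; the sign $(-1)^{pl+q}$ on the second term comes from combining the $(-1)^{pl+p}$ of \eqref{formula2} with the Koszul signs picked up while commuting $\Xi$ past $X$, $\Pi$ past $Y$ and reordering via \eqref{comm}. The main obstacle throughout is sign-tracking: matching the $(-1)^{|v||w|}$ of \eqref{lb}, the $(-1)^{|u||v|}$ of \eqref{comm}, and the transposition signs in $i_{x_j}^\alpha$ against the asserted $(-1)^{pl+p}(-1)^j$ and $(-1)^{pl+q}(-1)^j$ — so I would fix an explicit convention for $i_x^\alpha$ on decomposables and verify the sign bookkeeping carefully in the base cases before trusting the induction. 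The $\alpha$-twists themselves, once Lemma~\ref{interior} is in hand, propagate mechanically.
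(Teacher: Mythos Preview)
Your proposal is correct and follows essentially the same route as the paper: repeated use of the $\Ad_\alpha$-derivation rule \eqref{lb} (combined with \eqref{comm}) to peel off one factor at a time, with base cases supplied by properties (i)--(iii), first establishing \eqref{formula1}, then \eqref{formula2}, and finally \eqref{formula3} by expanding $\{\Xi\wedge X,\Pi\wedge Y\}_\alpha$ and killing the $\{\Xi,\Pi\}_\alpha$ and $\{X,Y\}_\alpha$ terms. One minor simplification: the paper does not invoke Lemma~\ref{interior} in the proof of \eqref{formula1}; the identity $i_x^\alpha(\xi_0\wedge\Xi')=\xi_0(\alpha^{-1}x)\,(\alpha^{-1})^*(\Xi')-(\alpha^{-1})^*(\xi_0)\wedge i_x^\alpha\Xi'$ follows directly from the definition of $i_x^\alpha$, so you can drop that appeal.
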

\begin{proof}
Since $\{~,\cdot\}_{\alpha}$ is an $\Ad_{\alpha}$-derivation, we have
\begin{gather*}
\{x, \Xi\}_{\alpha}
 = \{x,\xi_{0}\wedge\cdots\wedge\xi_{q-1}\wedge\xi_{q}\}_{\alpha} \\
 \hphantom{\{x, \Xi\}_{\alpha}}{}
 = \{x,\xi_{0}\wedge\cdots\wedge\xi_{q-1}\}_{\alpha}\wedge\big(\alpha^{-1}\big)^*(\xi_{q})
+(-1)^{q}\big(\alpha^{-1}\big)^*(\xi_{0}\wedge\cdots\wedge\xi_{q-1})\wedge\{x,\xi_{q}\}_{\alpha} \\
\hphantom{\{x, \Xi\}_{\alpha}}{}
 = \sum_{j=0}^{q}(-1)^{j}\big(\alpha^{-1}\big)^*(\xi_{0}\wedge\cdots\wedge\xi_{j-1})\wedge\{x,\xi_{j}\}_{\alpha}\wedge\big(\alpha^{-1}\big)^*(\xi_{j+1}
 \wedge\cdots\wedge\xi_{q})
 = i_{x}^{\alpha}\Xi.
\end{gather*}
Similarly, we have
\begin{gather*}
\{X, \Pi\}_{\alpha}
 = \{x_{0}\wedge x_{1}\wedge\cdots\wedge x_{p},\Pi\}_{\alpha} \\
 \hphantom{\{X, \Pi\}_{\alpha}}{}
 = \alpha(x_{0})\wedge\{x_{1}\wedge\cdots\wedge x_{p},\Pi\}_{\alpha}+(-1)^{p(l-1)}\{x_{0},\Pi\}_{\alpha}\otimes\alpha(x_{1}\wedge\cdots\wedge x_{p}) \\
\hphantom{\{X, \Pi\}_{\alpha}}{}
 = (-1)^{pl+p}\sum_{j=0}^{p}(-1)^{j}i_{x_{j}}^{\alpha}\Pi\otimes\alpha(x_{0}\wedge\cdots\wedge\hat{x_{j}}\wedge\cdots\wedge x_{p}).
\end{gather*}
By \eqref{comm}, \eqref{lb}, \eqref{formula1} and \eqref{formula2}, we have
\begin{gather*}
 \{\Xi\otimes X, \Pi\otimes Y\}_{\alpha}
 = \{\Xi\otimes X,\Pi\}_{\alpha}\wedge\alpha(Y)+(-1)^{|\Xi\otimes X|\cdot |\Pi|}\big(\alpha^{-1}\big)^*(\Pi)\wedge\{\Xi\otimes X,Y\}_{\alpha} \\
 \qquad{}
 = \big(\alpha^{-1}\big)^*(\Xi)\wedge\{X,\Pi\}_{\alpha}\wedge\alpha(Y)+(-1)^{|\Xi\otimes X|\cdot|\Pi|+|X|\cdot|Y|}\big(\alpha^{-1}\big)^*(\Pi)\wedge\{\Xi,Y\}_{\alpha}\wedge\alpha(X) \\
\qquad{}
 = \big(\alpha^{-1}\big)^*(\Xi)\wedge\left( (-1)^{pl+p}\sum_{j=0}^{p}(-1)^{j}i_{x_{j}}^{\alpha}\Pi\otimes\alpha(x_{0}\wedge\cdots\wedge\widehat{x_{j}}\wedge\cdots\wedge x_{p})\right) \wedge\alpha(Y) \\
\qquad\quad{}
 +(-1)^{(p+q)l+(p-1)k+1}\big(\alpha^{-1}\big)^*(\Pi)\\
\qquad\quad{}
 \wedge\left( \sum_{j=0}^{k}(-1)^{j} i_{y_{j}}^{\alpha}\Xi\otimes\alpha(y_{0}\wedge\cdots\wedge\widehat{y_{j}}\wedge\cdots\wedge y_{k})\right) \wedge\alpha(X) \\
\qquad{}
 = (-1)^{pl+p}\sum_{j=0}^{p}(-1)^{j}\big(\alpha^{-1}\big)^*(\Xi)\wedge i_{x_{j}}^{\alpha}\Pi\otimes\alpha(x_{0}\wedge\cdots\wedge\widehat{x_{j}}\wedge\cdots\wedge x_{p})\wedge\alpha(Y) \\
\qquad\quad{}
 -(-1)^{pl+q}\sum_{j=0}^{k}(-1)^{j}i_{y_{j}}^{\alpha}\Xi\wedge\big(\alpha^{-1}\big)^*(\Pi)\otimes\alpha(X)\wedge\alpha(y_{0}\wedge\cdots\wedge\widehat{y_{j}}\wedge\cdots\wedge y_{k}). \tag*{\qed}
\end{gather*}
\renewcommand{\qed}{}
\end{proof}

\begin{proof}[The proof of Theorem~\ref{Graded}]
We just need to prove the case: $P=\Xi\otimes X$, $Q=\Pi\otimes Y$, $W=\Theta\otimes Z$.
By~\eqref{formula3}, we have
\begin{gather*}
 \Ad_{\alpha}\{P,Q\}_{\alpha} \\
 = \Ad_{\alpha} \bigg( (-1)^{pl+p}\sum_{j=0}^{p}(-1)^{j}\big(\alpha^{-1}\big)^*(\Xi)\wedge i_{x_{j}}^{\alpha}\Pi\otimes\alpha(x_{0}\wedge\cdots\wedge\hat{x_{j}}\wedge\cdots\wedge x_{p})\wedge\alpha(Y) \\
 \quad{}
 -(-1)^{pl+q}\sum_{j=0}^{k}(-1)^{j}i_{y_{j}}^{\alpha}\Xi\wedge\big(\alpha^{-1}\big)^*(\Pi)\otimes\alpha(X)\wedge\alpha(y_{0}\wedge\cdots\wedge\widehat{y_{j}}\wedge\cdots\wedge y_{k})\bigg ) \\
 = (-1)^{pl+p}\sum_{j=0}^{p}(-1)^{j}\big(\alpha^{-2}\big)^*(\Xi)\wedge\big(\alpha^{-1}\big)^*(i_{x_{j}}^{\alpha}\Pi)\otimes\alpha^2(x_{0}\wedge\cdots\wedge\widehat{x_{j}}\wedge\cdots\wedge x_{p})\wedge\alpha^2(Y) \\
\quad{} -(-1)^{pl+q}\sum_{j=0}^{k}(-1)^{j}\big(\alpha^{-1}\big)^*(i_{y_{j}}^{\alpha}\Xi)\wedge\big(\alpha^{-2}\big)^*(\Pi)\otimes\alpha^2(X)\wedge\alpha^2(y_{0}\wedge\cdots\wedge\widehat{y_{j}}\wedge\cdots\wedge y_{k}).
\end{gather*}
On the other hand, we have
\begin{gather*}
 \{\Ad_{\alpha}P,\Ad_{\alpha}Q\}_{\alpha}
 = \big\{\big(\alpha^{-1}\big)^*(\Xi)\otimes\alpha(X),\big(\alpha^{-1}\big)^*(\Pi)\otimes\alpha(Y)\big\}_{\alpha} \\
 = (-1)^{pl+p}\sum_{j=0}^{p}(-1)^{j}\big(\alpha^{-2}\big)^*(\Xi)\wedge i_{\alpha(x_{j})}^{\alpha}\big(\big(\alpha^{-1}\big)^*(\Pi)\big)\otimes\alpha^2(x_{0}\wedge\cdots\wedge\widehat{x_{j}}\wedge\cdots\wedge x_{p})\wedge\alpha^2(Y) \\
 \quad{}
 -(-1)^{pl+q}\sum_{j=0}^{k}(-1)^{j}i_{\alpha(y_{j})}^{\alpha}\big(\big(\alpha^{-1}\big)^*(\Xi)\big)
 \wedge\big(\alpha^{-2}\big)^*(\Pi)\otimes\alpha^2(X)\\
 \quad{} \wedge\alpha^2(y_{0}\wedge\cdots\wedge\widehat{y_{j}}\wedge\cdots\wedge y_{k}).
\end{gather*}
By Lemma~\ref{interior}, we get \eqref{G1}.

By the Leibniz rule, \eqref{G2} is equivalent to
\begin{gather}
\{\{\Xi,Y\}_{\alpha},\alpha(Z)\}_{\alpha} = -(-1)^{|\Xi|\cdot|Y|}\{\alpha(Y),\{\Xi,Z\}_{\alpha}\}_{\alpha},\nonumber\\
\label{Adjacobi}
\big\{\big(\alpha^{-1}\big)^*(\Xi),\{\Pi,Z\}_{\alpha}\big\}_{\alpha} = (-1)^{|\Xi|\cdot|\Pi|}
\big\{\big(\alpha^{-1}\big)^*(\Pi),\{\Xi,Z\}_{\alpha}\big\}_{\alpha}.
\end{gather}
By straightforward computations, we have
\begin{gather*}
\{\{\Xi,Y\}_{\alpha},\alpha(Z)\}_{\alpha} = (-1)^{k+1}\sum_{i=0}^{k}\sum_{j=0}^{m}(-1)^{i+j}i_{\alpha(z_{j})}^{\alpha}i_{y_{i}}^{\alpha}\Xi
\otimes\alpha^2(y_{0}\wedge\cdots\wedge\widehat{y_{i}}\wedge\cdots\wedge y_{k}) \\
\hphantom{\{\{\Xi,Y\}_{\alpha},\alpha(Z)\}_{\alpha} =}{}
 \wedge\alpha^2(z_{0}\wedge\cdots\wedge\widehat{z_{j}}\wedge\cdots\wedge z_{m}).
\end{gather*}
Similarly, we have
\begin{gather*}
\{\alpha(Y),\{\Xi,Z\}_{\alpha}\}_{\alpha} = -(-1)^{q+kq}\sum_{j=0}^{m}\sum_{i=0}^{k}(-1)^{j+i}
 i_{\alpha(y_{i})}^{\alpha}i_{z_{j}}^{\alpha}\Xi\\
 \hphantom{\{\alpha(Y),\{\Xi,Z\}_{\alpha}\}_{\alpha} =}{}
 \otimes\alpha^2(y_{0}\wedge\cdots\wedge\widehat{y_{i}}\wedge\cdots\wedge y_{k})\wedge\alpha^2(z_{0}\wedge\cdots\wedge\widehat{z_{j}}\wedge\cdots\wedge z_{m}).
\end{gather*}
By Lemma~\ref{interior}, we get $\{\{\Xi,Y\}_{\alpha},\alpha(Z)\}_{\alpha}=-(-1)^{|\Xi|\cdot|Y|}\{\alpha(Y),\{\Xi,Z\}_{\alpha}\}_{\alpha}$.

Similarly, we can prove that \eqref{Adjacobi} holds.
\end{proof}

At the end of this section, we show that the hom-big bracket is consistent with the hom-Nijenhuis--Richardson bracket.
\begin{pro}\label{pro:consistant}
For all $\Xi\in\wedge^{q+1}V^{*}$, $\Pi\in\wedge^{l+1}V^{*}$, $x,y\in V$, we have
\begin{gather}\label{relation}
\{\Xi\otimes x, \Pi\otimes y\}_{\alpha}=-(-1)^{ql} [\Xi\otimes x, \Pi\otimes y]_{\alpha}.
\end{gather}
\end{pro}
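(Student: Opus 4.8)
The plan is to compute both sides of \eqref{relation} explicitly and match them term by term. The left-hand side is essentially already in hand from the previous lemma: applying \eqref{formula3} with $X=x\in\wedge^{1}V$ and $Y=y\in\wedge^{1}V$, so that $p=k=0$ and $\alpha$ of the empty wedge is the scalar $1$, gives
\begin{gather*}
\{\Xi\otimes x,\Pi\otimes y\}_{\alpha}=\big(\alpha^{-1}\big)^{*}(\Xi)\wedge i_{x}^{\alpha}\Pi\otimes\alpha(y)-(-1)^{q}\,i_{y}^{\alpha}\Xi\wedge\big(\alpha^{-1}\big)^{*}(\Pi)\otimes\alpha(x).
\end{gather*}

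For the right-hand side I would expand $[\Xi\otimes x,\Pi\otimes y]_{\alpha}=(\Xi\otimes x)\circ(\Pi\otimes y)-(-1)^{ql}(\Pi\otimes y)\circ(\Xi\otimes x)$ directly from the definition~\eqref{hom-NR brackethalf}, viewing $\Xi\otimes x\in\wedge^{q+1}V^{*}\otimes V$ as the element of $C^{q+1}(V,V)$ sending $(v_{1},\dots,v_{q+1})$ to $\Xi(v_{1},\dots,v_{q+1})\,x$, and similarly for $\Pi\otimes y$. In the composition $(\Xi\otimes x)\circ(\Pi\otimes y)$ the inner slot $Q\big(\alpha^{-1}x_{\sigma(1)},\dots,\alpha^{-1}x_{\sigma(l+1)}\big)$ evaluates to $\big((\alpha^{-1})^{*}\Pi\big)(x_{\sigma(1)},\dots,x_{\sigma(l+1)})$ times $y$, and applying $\alpha^{-1}$ and then $P$ produces $\big(i_{y}^{\alpha}\Xi\big)(x_{\sigma(l+2)},\dots,x_{\sigma(q+l+1)})$ times $\alpha(x)$. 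Recognising the resulting sum over $(l+1,q)$-unshuffles as the defining formula for the wedge product of an $(l+1)$-form with a $q$-form, I get $(\Xi\otimes x)\circ(\Pi\otimes y)=\big(\alpha^{-1}\big)^{*}(\Pi)\wedge i_{y}^{\alpha}\Xi\otimes\alpha(x)$ and, by the same computation with the two factors interchanged, $(\Pi\otimes y)\circ(\Xi\otimes x)=\big(\alpha^{-1}\big)^{*}(\Xi)\wedge i_{x}^{\alpha}\Pi\otimes\alpha(y)$.

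Substituting these two identities, then moving $\big(\alpha^{-1}\big)^{*}(\Pi)$ past the degree-$q$ form $i_{y}^{\alpha}\Xi$ by graded commutativity of $\wedge$ (which contributes $(-1)^{(l+1)q}$), and finally multiplying by $-(-1)^{ql}$, the coefficient of the first term becomes $-(-1)^{ql+(l+1)q}=-(-1)^{q}$ and that of the second becomes $(-1)^{2ql}=1$; the result is precisely the displayed formula for $\{\Xi\otimes x,\Pi\otimes y\}_{\alpha}$ above, which is \eqref{relation}. The only point requiring care is sign and convention bookkeeping: one must fix once and for all the identification $\wedge^{\bullet+1}V^{*}\otimes V\cong C^{\bullet+1}(V,V)$ and the unshuffle convention so that the unshuffle sum in \eqref{hom-NR brackethalf} reproduces the wedge product of forms with no extra numerical factor, and then carry the signs $(-1)^{ql}$, $(-1)^{q}$ and $(-1)^{(l+1)q}$ consistently through. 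Since $x,y$ lie in $V$ the degrees involved are low enough that Lemma~\ref{interior} is not actually needed here; the whole argument is a short but sign-sensitive direct calculation.
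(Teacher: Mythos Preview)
Your proposal is correct and follows essentially the same approach as the paper. Both proofs specialise \eqref{formula3} to $p=k=0$ for the hom-big-bracket side and unpack the definition~\eqref{hom-NR brackethalf} of $\circ$ for the hom-Nijenhuis--Richardson side; the paper evaluates both sides pointwise on $x_{1},\dots,x_{q+l+1}$ and matches the unshuf\/f\/le sums, whereas you package the composition directly as $\big(\alpha^{-1}\big)^{*}(\Pi)\wedge i_{y}^{\alpha}\Xi\otimes\alpha(x)$ before comparing, but this is only a presentational dif\/ference.
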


\begin{proof}
By \eqref{formula3}, we have
\begin{gather*}
\{\Xi\otimes x,\Pi\otimes y\}_{\alpha}(x_{1},\dots,x_{q+l+1})\nonumber\\
 =\big(\big(\alpha^{-1}\big)^*(\Xi)\wedge i_{x}^{\alpha}\Pi\otimes\alpha(y)-(-1)^{ql}\big(\alpha^{-1}\big)^*(\Pi)\wedge i_{y}^{\alpha}\Xi\otimes\alpha(x)\big)(x_{1},\dots,x_{q+l+1})\nonumber\\
=\frac{1}{(q+1)!l!}\sum_{\sigma\in S_{q+l+1}}\operatorname{sgn}(\sigma)
\big(\alpha^{-1}\big)^*(\Xi)(x_{\sigma(1)},\dots,x_{\sigma(q+1)})(i_{x}^{\alpha}\Pi)(x_{\sigma(q+2)},\dots,x_{\sigma(q+l+1)})\alpha(y) \\
\quad{}
-(-1)^{ql}\frac{1}{(l+1)!q!}\sum_{\sigma\in S_{q+l+1}}\operatorname{sgn}(\sigma) \\
\quad{}\times
\big(\alpha^{-1}\big)^*(\Pi)(x_{\sigma(1)},\dots,x_{\sigma(l+1)})(i_{y}^{\alpha}\Xi)(x_{\sigma(l+2)},\dots,x_{\sigma(q+l+1)})\alpha(x) \\
=\sum_{\sigma\in(q+1,l)\text{-unshuf\/f\/les}}\operatorname{sgn}(\sigma) \\
\quad{}\times
\Xi\big(\alpha^{-1}x_{\sigma(1)},\dots,\alpha^{-1}x_{\sigma(q+1)}\big)
\Pi\big(\alpha^{-1}x,\alpha^{-1}x_{\sigma(q+2)},\dots,\alpha^{-1}x_{\sigma(q+l+1)}\big)\alpha(y) \\
\quad{}
-(-1)^{ql}\sum_{\sigma\in(l+1,q)\text{-unshuf\/f\/les}}\operatorname{sgn}(\sigma) \\
\quad{}\times
\Pi\big(\alpha^{-1}x_{\sigma(1)},\dots,\alpha^{-1}x_{\sigma(l+1)}\big)
\Xi\big(\alpha^{-1}y,\alpha^{-1}x_{\sigma(l+2)},\dots,\alpha^{-1}x_{\sigma(q+l+1)}\big)\alpha(x)
\\
=-(-1)^{ql} \bigg( \sum_{\sigma\in(l+1,q)\text{-unshuf\/f\/les}}\operatorname{sgn}(\sigma) \\
\quad{}\times
\alpha(\Xi\otimes x)\big(\alpha^{-1}(\Pi\otimes y)\big(\alpha^{-1}x_{\sigma(1)},\dots,\alpha^{-1}x_{\sigma(l+1)}\big),\alpha^{-1}x_{\sigma(l+2)},\dots,\alpha^{-1}x_{\sigma(q+l+1)}\big) \\
\quad{}-(-1)^{ql}\sum_{\sigma\in(q+1,l)\text{-unshuf\/f\/les}}\operatorname{sgn}(\sigma) \\
\quad{}\times
\alpha(\Pi\otimes y)\big(\alpha^{-1}(\Xi\otimes x)\big(\alpha^{-1}x_{\sigma(1)},\dots,\alpha^{-1}x_{\sigma(q+1)}\big),\alpha^{-1}x_{\sigma(q+2)},\dots,\alpha^{-1}x_{\sigma(q+l+1)}\big)\bigg) \\
=-(-1)^{ql}[\Xi\otimes x,\Pi\otimes y]_\alpha(x_{1},\dots,x_{q+l+1}),\nonumber
\end{gather*}
which implies that \eqref{relation} holds.
\end{proof}

\section{Hom-Lie bialgebras}\label{section5}

The big bracket is a very useful tool to study bialgebra structures. In this section, we follow the classical approach to def\/ine a hom-Lie bialgebra using the hom-big bracket, which turns out to be the same as the one given in~\cite{Yao3}. Furthermore, using the hom-big bracket, it is very easy to give the notions of hom-Lie quasi-bialgebras and hom-quasi-Lie bialgebras.

First we describe hom-Lie algebras and hom-Lie coalgebras using the hom-big bracket.

\begin{pro}\label{mumu}
Let $V$ be a vector space and $\mu \colon \wedge^{2}V\rightarrow V$ a skew-symmetric bilinear map satisfying $\Ad_{\alpha}\mu=\mu$. Then we have
\begin{gather}\label{mu}
\mu(x,y)=-\big\{\{\mu,\alpha^{-1}(x)\big\}_{\alpha},y\}_{\alpha},\qquad \forall \, x, y\in V.
\end{gather}
Furthermore, $\{\mu,\mu\}_{\alpha}=0$ if and only if $\mu$ satisfies the hom-Jacobi identity with respect to~$\alpha$.
\end{pro}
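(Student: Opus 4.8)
The plan is to prove the two statements separately, both via the explicit formulas for the hom-big bracket established in the previous lemma, specifically \eqref{formula1} and \eqref{formula2} together with the $\Ad_\alpha$-derivation property \eqref{lb} and the graded-commutativity \eqref{comm}.

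For formula \eqref{mu}, I would view $\mu$ as an element $\mu \in \wedge^2 V^* \otimes V$, that is, a sum of terms of the form $\Xi \otimes X$ with $\Xi \in \wedge^2 V^*$ and $X \in V$ (so $|\mu| = 0$, $q+1 = 2$, $p+1 = 1$ in the notation of the lemma). First I would compute $\{\mu, \alpha^{-1}(x)\}_\alpha$: using graded-commutativity this is $\pm\{\alpha^{-1}(x), \mu\}_\alpha$, and since $\alpha^{-1}(x) \in V$, formula \eqref{formula1} applies and yields an interior product $i_{\alpha^{-1}(x)}^\alpha$ acting on the $\wedge^2 V^*$ factor, tensored with the $V$ factor (with appropriate $\Ad_\alpha$-twists coming from the derivation property when the bracket passes through the tensor factor). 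The point of inserting $\alpha^{-1}(x)$ rather than $x$ is precisely that $i^\alpha_{\alpha^{-1}(x)}\Xi$ has the value $\Xi(\alpha^{-1}\alpha^{-1}(x), \cdot) $ — wait, more carefully, $(i^\alpha_{\alpha^{-1}(x)}\Xi)(z) = \Xi(\alpha^{-1}\alpha^{-1}(x), \alpha^{-1}z)$, and the ambient $\alpha$'s will be arranged so that after the second bracket with $y$ the net effect reconstructs $\mu(x,y)$ on the nose. Then I would take the bracket of this intermediate expression with $y \in V$, again applying \eqref{formula1}/\eqref{comm} to produce a second interior product, and carefully track the signs and the powers of $\alpha$ to verify that everything collapses to $\mu(x,y)$, using $\Ad_\alpha\mu = \mu$ (equivalently $(\alpha^{-1})^*(\Xi)\otimes\alpha(X) = \Xi \otimes X$) to cancel the leftover $\alpha$-twists. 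This is essentially a bookkeeping computation; the normalization constant $(q+1)! \, l!$ appearing in the pairing conventions (visible in the proof of Proposition~\ref{pro:consistant}) must be handled consistently.

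For the second statement, I would argue that $\{\mu,\mu\}_\alpha = 0$ if and only if the hom-Jacobi identity holds. Here I have two natural routes. The cleaner one is to invoke consistency with the hom-Nijenhuis--Richardson bracket: Proposition~\ref{pro:consistant} gives $\{\Xi\otimes x, \Pi\otimes y\}_\alpha = -(-1)^{ql}[\Xi\otimes x, \Pi\otimes y]_\alpha$ for $x,y \in V$; extending by the derivation rule (or directly, since $\mu$ has its $V$-factor of degree one) one gets $\{\mu,\mu\}_\alpha = \pm [\mu,\mu]_\alpha$ under the identification $C^2(V,V) = \wedge^2 V^* \otimes V$. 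Then Lemma~\ref{lem:mu} says that, given $\Ad_\alpha\mu = \mu$, the vanishing $[\mu,\mu]_\alpha = 0$ is equivalent to the hom-Jacobi identity, and we are done. Alternatively, one computes $\{\mu,\mu\}_\alpha$ directly from \eqref{formula3} (with $p = k = 0$, $q = l = 1$), obtaining a sum of terms of the form $(\alpha^{-1})^*(\Xi)\wedge i^\alpha_x \Pi \otimes \alpha(y)$, and then pairs against three vectors $x_1,x_2,x_3$ to recover the three cyclic terms $\mu(\alpha(x_1),\mu(x_2,x_3)) + \text{cyclic}$, exactly as in the proof of Lemma~\ref{lem:mu}.

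The main obstacle I anticipate is purely combinatorial: getting the signs and the powers of $\alpha$ right in formula \eqref{mu}, because the bracket is applied twice and each application threads through the tensor-product structure via the $\Ad_\alpha$-twisted Leibniz rule \eqref{lb}, which sprinkles $(\alpha^{-1})^*$ and $\alpha$ on the ``spectator'' factors; one must check these twists, together with the explicit $\alpha^{-1}$'s in the definitions of $i^\alpha_x$ and of $\{x,\xi\}_\alpha = \xi(\alpha^{-1}x)$, conspire so that the final answer is $\mu(x,y)$ with no residual $\alpha$. The $\alpha^{-1}(x)$ in the argument is the crucial device making this work, and the assumption $\Ad_\alpha\mu = \mu$ is what absorbs the last twist; once those are used correctly the rest is routine. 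For the equivalence part, using Proposition~\ref{pro:consistant} and Lemma~\ref{lem:mu} makes it essentially immediate, so the only real work is the first displayed formula.
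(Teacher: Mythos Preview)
Your proposal is correct, but the route for the first identity differs from the paper's.

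For \eqref{mu}, you propose a direct computation inside the hom-big bracket: write $\mu$ as a sum of tensors $\Xi\otimes X$, apply \eqref{lb} and \eqref{formula1} to peel off one interior product with $\alpha^{-1}(x)$, repeat with $y$, and then invoke $\Ad_\alpha\mu=\mu$ (equivalently $\Ad_\alpha^2\mu=\mu$) to absorb the residual $\alpha$-twists. This works and is self-contained, but it is exactly the sign-and-$\alpha$ bookkeeping exercise you anticipate. The paper instead avoids this by passing through the hom-Nijenhuis--Richardson bracket: using \eqref{hom-NR brackethalf} and $\Ad_\alpha\mu=\mu$ one has $\mu(x,y)=[\mu,x]_\alpha(y)$, then \eqref{graded hom-Lie alg1} gives $[\mu,x]_\alpha=\Ad_\alpha[\mu,\alpha^{-1}(x)]_\alpha$, whence $\mu(x,y)=[[\mu,\alpha^{-1}(x)]_\alpha,y]_\alpha$, and finally Proposition~\ref{pro:consistant} converts each $[\cdot,\cdot]_\alpha$ to $\{\cdot,\cdot\}_\alpha$ with the required sign. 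This is shorter and offloads all the sign work onto the already-proved Proposition~\ref{pro:consistant}; your approach has the virtue of being intrinsic to the big bracket and not relying on that comparison result.

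For the second assertion, your ``cleaner route'' via Proposition~\ref{pro:consistant} and Lemma~\ref{lem:mu} is exactly what the paper mentions first. The paper then also gives an alternative direct argument, but using the graded hom-Jacobi identity \eqref{G2} together with the just-proved formula \eqref{mu} to expand $\{\{\{\{\mu,\mu\}_\alpha,\alpha^{-1}(x)\}_\alpha,y\}_\alpha,\alpha(z)\}_\alpha$, rather than your suggestion of evaluating \eqref{formula3} on three vectors. Both variants are straightforward.
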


\begin{proof}
By \eqref{hom-NR brackethalf}, \eqref{graded hom-Lie alg1}, \eqref{relation} and the fact that $\Ad_{\alpha}\mu=\mu$, we have
\begin{gather*}
\mu(x,y) = \alpha\mu\big(\alpha^{-1}(x),\alpha^{-1}(y)\big)
=(\mu\circ x)(y)
=[\mu,x]_{\alpha}(y)
=\big(\Ad_{\alpha}\big[\mu,\alpha^{-1}(x)\big]_{\alpha}\big)(y) \\
\hphantom{\mu(x,y)}{}
 = \alpha\big(\big[\mu,\alpha^{-1}(x)\big]_{\alpha}\big)\big(\alpha^{-1}(y)\big)
=\big[\mu,\alpha^{-1}(x)\big]_{\alpha}\circ y
=\big[\big[\mu,\alpha^{-1}(x)\big]_{\alpha},y\big]_{\alpha} \\
\hphantom{\mu(x,y)}{}
=\big[\big\{\mu,\alpha^{-1}(x)\big\}_{\alpha},y\big]_{\alpha}
=-\big\{\big\{\mu,\alpha^{-1}(x)\big\}_{\alpha},y\big\}_{\alpha}.
\end{gather*}

The second conclusion can be obtained directly by Lemma~\ref{lem:mu} and Proposition~\ref{pro:consistant}. Here we give another proof using the hom-big bracket.
By \eqref{G2} and \eqref{mu}, we have
 \begin{gather*}
\big\{\big\{\big\{\big\{\mu,\mu\}_{\alpha},\alpha^{-1}(x)\big\}_\alpha,y\big\}_{\alpha},\alpha(z)\big\}_{\alpha}
=2\big\{\big\{\big\{\mu,\big\{\mu,\alpha^{-2}(x)\big\}_{\alpha}\big\}_{\alpha},y\big\}_{\alpha},\alpha(z)\big\}_{\alpha} \\
=2\big\{\big\{\mu,\big\{\{\mu,\alpha^{-2}(x)\big\}_{\alpha},\alpha^{-1}(y)\big\}_{\alpha}\big\}_{\alpha},\alpha(z)\big\}_{\alpha}
+2\big\{\big\{\big\{\mu,\alpha^{-1}(y)\big\}_{\alpha},\big\{\mu,\alpha^{-1}(x)\big\}_{\alpha}\big\}_{\alpha},\alpha(z)\big\}_{\alpha}\\
=-2\big\{\{\mu,z\}_{\alpha},\big\{\big\{\mu,\alpha^{-1}(x)\big\}_{\alpha},y\big\}_{\alpha}\big\}_{\alpha}
+2\big\{\{\mu,y\}_{\alpha},\big\{\big\{\mu,\alpha^{-1}(x)\big\}_{\alpha},z\big\}_{\alpha}\big\}_{\alpha} \\
\quad{}
+2\big\{\big\{\big\{\mu,\alpha^{-1}y\big\}_{\alpha},z\big\}_{\alpha},\{\mu,x\}_{\alpha}\big\}_{\alpha} \\
=-2(\mu(\alpha(z),\mu(x,y))+\mu(\alpha(y),\mu(z,x))+\mu(\alpha(x),\mu(y,z))),
\end{gather*}
which implies that $\{\mu,\mu\}_{\alpha}=0$ if and only if $\mu$ satisf\/ies the hom-Jacobi identity.
\end{proof}

\begin{pro}\label{deltadelta} Let $V$ be a vector space and $\Delta \colon V\rightarrow\wedge^{2}V$ a linear map satisfying $\Ad_{\alpha}\Delta=\Delta$.
Then we have
 \begin{gather*}
\Delta(x) = \{\Delta, x\}_{\alpha},\qquad \forall \, x\in V,\\
\Delta^{*}(\xi,\eta) = -\big\{\big\{\Delta,\big(\alpha^{3}\big)^{*}(\xi)\big\}_{\alpha},\big(\alpha^{2}\big)^{*}(\eta)\big\}_{\alpha},
\qquad \forall \, \xi, \eta\in V^{*}.
\end{gather*}
Furthermore, $\{\Delta,\Delta\}_{\alpha}=0$ if and only if $\Delta^{*}$ satisfies the hom-Jacobi identity with respect to $\alpha^*$.
\end{pro}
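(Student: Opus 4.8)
The plan is to mirror the structure of the proof of Proposition~\ref{mumu}, exploiting the fact that the hom-big bracket is graded skew-symmetric and that $\Ad_\alpha$ is compatible with it (Theorem~\ref{Graded}). First I would establish the identity $\Delta(x)=\{\Delta,x\}_\alpha$ directly from the definition of the hom-big bracket: since $\Delta\in V^*\otimes\wedge^2V$ has degree $0$, formula~\eqref{formula3} (or rather its specialization via the $\Ad_\alpha$-derivation property~\eqref{lb}, writing $\Delta=\sum\xi\otimes(u\wedge v)$) contracts the covector slot of $\Delta$ against $x$ using $i_x^\alpha$, and the condition $\Ad_\alpha\Delta=\Delta$, which reads $(\alpha^{-1})^*(\xi)\otimes\alpha(u)\wedge\alpha(v)=\xi\otimes u\wedge v$, is exactly what is needed to cancel the stray $\alpha$'s and recover $\Delta(x)$ on the nose. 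This is the same bookkeeping as in the displayed computation of~$\{x,\Xi\}_\alpha=i_x^\alpha\Xi$.

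Next I would compute $\Delta^*(\xi,\eta)$. The dual map $\Delta^*\colon\wedge^2V^*\to V^*$ is paired with $V$ via the big bracket, so I expect $\Delta^*(\xi,\eta)$ to be obtained by bracketing $\Delta$ twice against covectors, exactly dual to the formula $\mu(x,y)=-\{\{\mu,\alpha^{-1}(x)\}_\alpha,y\}_\alpha$ in Proposition~\ref{mumu}. The powers $(\alpha^3)^*$ and $(\alpha^2)^*$ appearing in the statement are the covector-side analogues of the single $\alpha^{-1}$ in~\eqref{mu}; they arise because each application of the hom-big bracket on the $V^*$-side introduces a shift by $\alpha^*$ (via property~(iii), $\{x,\xi\}_\alpha=\xi(\alpha^{-1}x)$, and the Leibniz rule~\eqref{lb} which inserts $(\alpha^{-1})^*$ on surviving factors), and one must pre-compose with enough powers of $(\alpha^*)$ to undo them. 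Concretely I would start from $\{\Delta,(\alpha^3)^*\xi\}_\alpha$, use~\eqref{formula1} and the derivation property to see this is (a multiple of) an element of $V^*\otimes V$, then bracket again with $(\alpha^2)^*\eta$ and track every $\alpha^*$-power and sign using Lemma~\ref{interior}, checking that the result is $-\Delta^*(\xi,\eta)$.

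For the final equivalence, $\{\Delta,\Delta\}_\alpha=0$ iff $\Delta^*$ satisfies the hom-Jacobi identity with respect to $\alpha^*$, I would repeat the ``second proof'' of Proposition~\ref{mumu} verbatim on the dual side. That is: apply the graded hom-Jacobi identity~\eqref{G2} to expand $\{\{\{\{\Delta,\Delta\}_\alpha,(\alpha^{?})^*\xi\}_\alpha,(\alpha^{?})^*\eta\}_\alpha,(\alpha^{?})^*\zeta\}_\alpha$, use the already-established formula for $\Delta^*$ together with~\eqref{G2} to rewrite each term, and collect everything into a sum of three terms of the form $\Delta^*((\alpha^*)(\zeta),\Delta^*(\xi,\eta))$ plus cyclic permutations — precisely (twice) the hom-Jacobi expression for $(V^*,\Delta^*,\alpha^*)$. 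Since the big bracket computations are symmetric in the roles of $V$ and $V^*$ (the defining properties (i)--(v) of $\{\cdot,\cdot\}_\alpha$ are invariant under swapping $V\leftrightarrow V^*$ and $\alpha\leftrightarrow(\alpha^{-1})^*$, up to the sign conventions already recorded), this is a mechanical dualization and no new idea is required.

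The main obstacle I anticipate is purely computational: getting the exact powers of $\alpha^*$ and the signs right in the formula for $\Delta^*(\xi,\eta)$, since unlike the $V$-side case in~\eqref{mu} the covector slots accumulate $(\alpha^{-1})^*$ factors from the Leibniz rule~\eqref{lb} in a less symmetric way, and one has to pre-apply $(\alpha^3)^*$ and $(\alpha^2)^*$ rather than a single inverse power. I would handle this by first doing the baby case $\Delta\colon V\to\wedge^2V$ with $V$ small (so $\Delta=\sum_i\xi_i\otimes(u_i\wedge v_i)$) to fix the normalization, then invoking Lemma~\ref{interior} and the identity $\{x,\xi\}_\alpha=\xi(\alpha^{-1}x)$ repeatedly; once the formula for $\Delta^*$ is pinned down, the equivalence follows from~\eqref{G2} exactly as in Proposition~\ref{mumu}, so I would simply write ``the rest is analogous to the proof of Proposition~\ref{mumu}'' for that part.
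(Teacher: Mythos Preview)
Your plan is correct and matches the paper's proof essentially point for point: the paper also computes $\{\Delta,x\}_\alpha$ by expanding $\Delta$ in a basis and using $\Ad_\alpha\Delta=\Delta$, derives the formula for $\Delta^*(\xi,\eta)$ by a direct bracket computation tracking the $\alpha^*$-powers, and proves the equivalence by expanding $\{\{\{\{\Delta,\Delta\}_\alpha,(\alpha^5)^*\xi\}_\alpha,(\alpha^4)^*\eta\}_\alpha,(\alpha^3)^*\delta\}_\alpha$ via~\eqref{G2} exactly as you describe. The only thing you leave as $(\alpha^?)^*$ the paper pins down as $5,4,3$, but your diagnosis that this is purely bookkeeping is accurate.
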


\begin{proof}
 Let $\{e_1,\dots,e_n\}$ be a basis of $V$ and $\{e^1,\dots,e^n\}$ the dual basis of $V^*$. Assume $\Delta=\sum\limits_{i,j,k}\Delta_{k}^{i,j}e^{k}\otimes e_{i}\wedge e_{j}$, then we have
\begin{gather*}
\{\Delta, x\}_{\alpha} = \bigg\{\sum_{i,j,k}\Delta_{k}^{i,j} e^{k}\otimes e_{i}\wedge e_{j},x\bigg\}_{\alpha}=\sum_{i,j,k}\Delta_{k}^{i,j}e^{k}\big(\alpha^{-1}x\big)\alpha(e_{i}\wedge e_{j}) \\
\hphantom{\{\Delta, x\}_{\alpha}}{}
 = (\Ad_\alpha\Delta)(x)=\Delta(x).\nonumber
\end{gather*}
Furthermore, by $\Ad_{\alpha}\Delta=\Delta$, we have
\begin{gather*}
\Delta^{*}(\xi,\eta) = -\Delta^{*}(\eta,\xi)=-(\Ad_{\alpha}\Ad_{\alpha}\Delta)(\eta,\xi) \\
\hphantom{\Delta^{*}(\xi,\eta)}{}
 = -\bigg( \sum_{i,j,k}\Delta_{k}^{i,j}\big(\alpha^{-2}\big)^{*}\big(e^{k}\big)\otimes\alpha^{2}(e_{i})\wedge\alpha^{2}(e_{j})\bigg) (\eta,\xi) \\
 \hphantom{\Delta^{*}(\xi,\eta)}{}
 = -\sum_{i,j,k}\Delta_{k}^{i,j}\big(\alpha^{-2}\big)^{*}\big(e^{k}\big)\big(\eta\big(\alpha^{2}e_{i}\big)\xi\big(\alpha^{2}e_{j}\big)
 -\xi\big(\alpha^{2}e_i\big)\eta\big(\alpha^{2}e_{j}\big)\big) \\
 \hphantom{\Delta^{*}(\xi,\eta)}{}
 = -\bigg\{\sum_{i,j,k}\Delta_{k}^{i,j} \big(\alpha^{-1}\big)^{*}\big(e^{k}\big)\otimes\big(\xi\big(\alpha^{2}e_{j}\big)\alpha(e_{i})-\xi\big(\alpha^{2}e_{i}\big)\alpha(e_{j})\big), \big(\alpha^{2}\big)^{*}(\eta)\bigg\}_{\alpha} \\
 \hphantom{\Delta^{*}(\xi,\eta)}{}
 = -\big\{\big\{\Delta,\big(\alpha^{3}\big)^{*}(\xi)\big\}_{\alpha},\big(\alpha^{2}\big)^{*}(\eta)\big\}_{\alpha}.
\end{gather*}
Finally, we have
\begin{gather*}
 \big\{\big\{\big\{\{\Delta,\Delta\}_{\alpha},\big(\alpha^{5}\big)^{*}(\xi)\big\}_{\alpha},
 \big(\alpha^{4}\big)^{*}(\eta)\big\}_{\alpha},\big(\alpha^{3}\big)^{*}(\delta)\big\}_{\alpha} \\
 \qquad{}= 2\big\{\big\{\big\{\Delta,\big\{\Delta,\big(\alpha^{6}\big)^{*}(\xi)\big\}_{\alpha}\big\}_{\alpha},
 \big(\alpha^{4}\big)^{*}(\eta)\big\}_{\alpha},\big(\alpha^{3}\big)^{*}(\delta)\big\}_{\alpha} \\
 \qquad{}
 = 2\big\{\big\{\Delta,\big\{\big\{\Delta,\big(\alpha^{6}\big)^{*}(\xi)\big\}_{\alpha},
 \big(\alpha^{5}\big)^{*}(\eta)\big\}_{\alpha}\big\}_{\alpha},\big(\alpha^{3}\big)^{*}(\delta)\big\}_{\alpha}\\
 \qquad\quad{}
 +2\big\{\big\{\big\{\Delta,\big(\alpha^{5}\big)^{*}(\eta)\big\}_{\alpha},\big\{\Delta,\big(\alpha^{5}\big)^{*}(\xi)\big\}_{\alpha}
 \big\}_{\alpha},\big(\alpha^{3}\big)^{*}(\delta)\big\}_{\alpha} \\
 \qquad{}
 = -2\big\{\big\{\Delta,\big(\alpha^{4}\big)^{*}(\delta)\big\}_{\alpha},\big\{\big\{\Delta,
 \big(\alpha^{5}\big)^{*}(\xi)\big\}_{\alpha},\big(\alpha^{4}\big)^{*}(\eta)\big\}_{\alpha}\big\}_{\alpha} \\
 \qquad\quad{}
 +2\big\{\big\{\Delta,\big(\alpha^{4}\big)^{*}(\eta)\big\}_{\alpha},
 \big\{\big\{\Delta,\big(\alpha^{5}\big)^{*}(\xi)\big\}_{\alpha},\big(\alpha^{4}\big)^{*}(\delta)\big\}_{\alpha}\big\}_{\alpha}\\
 \qquad\quad {}
 +2\big\{\big\{\big\{\Delta,\big(\alpha^{5}\big)^{*}(\eta)\big\}_{\alpha},
 \big(\alpha^{4}\big)^{*}(\delta)\big\}_{\alpha},\big\{\Delta,\big(\alpha^{4}\big)^{*}(\xi)\big\}_{\alpha}\big\}_{\alpha} \\
 \qquad{}
 = 2\big\{\big\{\Delta,\big(\alpha^{4}\big)^{*}(\delta)\big\}_{\alpha},\Delta^{*}\big(\big(\alpha^{2}\big)^{*}(\xi),
 \big(\alpha^{2}\big)^{*}(\eta))\big\}_{\alpha} \\
\qquad\quad{} -2\big\{\big\{\Delta,\big(\alpha^{4}\big)^{*}(\eta)\big\}_{\alpha},\Delta^{*}\big(\big(\alpha^{2}\big)^{*}(\xi),
\big(\alpha^{2}\big)^{*}(\delta))\big\}_{\alpha}\\
\qquad\quad{}
 -2\big\{\Delta^{*}\big(\big(\alpha^{2}\big)^{*}(\eta),\big(\alpha^{2}\big)^{*}(\delta)),\big\{\Delta,\big(\alpha^{4}\big)^{*}(\xi)\big\}_{\alpha}\big\}_{\alpha} \\
 \qquad{}
 = -2\big(\Delta^{*}\big(\alpha^{*}(\delta),\Delta^{*}(\xi,\eta)\big)+\Delta^{*}\big(\alpha^{*}(\eta),\Delta^{*}(\delta,\xi)\big)
+\Delta^{*}\big(\alpha^{*}(\xi),\Delta^{*}(\eta,\delta)\big)\big),
\end{gather*}
which implies that $\{\Delta,\Delta\}_{\alpha}=0$ if and only if $\Delta^{*}$ satisf\/ies the hom-Jacobi identity with respect to~$\alpha^*$.
\end{proof}

Now we give the notion of a hom-Lie bialgebra using the hom-big bracket.

\begin{defi}\looseness=-1 Let $V$ be a vector space, $\alpha\in {\rm GL}(V)$, $\mu\colon \wedge^2 V\longrightarrow V$ and $\Delta \colon V\rightarrow\wedge^{2}V$ linear maps satisfying $\Ad_{\alpha}\mu=\mu$ and $\Ad_{\alpha}\Delta=\Delta$.
The quadruple
$(V,\mu,\Delta,\alpha)$ is a hom-Lie bialgebra if
\begin{gather}\label{eq:bialgebra}
\{\mu+\Delta,\mu+\Delta\}_{\alpha}=0.
\end{gather}
\end{defi}

It is helpful to describe a hom-Lie bialgebra using the usual algebraic language.
\begin{pro}
With the above notations, a quadruple $(V, \mu, \Delta, \alpha)$ is a hom-Lie bialgebra if and only if
\begin{itemize}\itemsep=0pt

\item[$(i)$] $(V, \mu, \alpha)$ is a hom-Lie algebra;

\item[$(ii)$] $(V^{*}, \Delta^{*}, \alpha^{*})$ is a hom-Lie algebra;

\item[$(iii)$] $\Delta(\mu(x, y))=\ad^\mu_{\alpha(x)}\Delta(y)-\ad^\mu_{\alpha(y)}\Delta(x)$, for all $ x,y\in V$, where $\ad^\mu$ is the action of the hom-Lie algebra $(V, \mu, \alpha)$ on $\wedge^2V $ given by $\ad^\mu_x(y\wedge z)=\mu(x,y)\wedge z+y\wedge\mu(x,z)$.
\end{itemize}
\end{pro}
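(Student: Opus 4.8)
The plan is to expand the single equation $\{\mu+\Delta,\mu+\Delta\}_\alpha=0$ by bilinearity of the hom-big bracket and sort the resulting four terms by bidegree. Write
\[
\{\mu+\Delta,\mu+\Delta\}_\alpha=\{\mu,\mu\}_\alpha+2\{\mu,\Delta\}_\alpha+\{\Delta,\Delta\}_\alpha,
\]
using the graded-commutativity \eqref{comm} to merge the two cross terms (here $|\mu|=|\Delta|=1$, so $\{\mu,\Delta\}_\alpha=\{\Delta,\mu\}_\alpha$). Next I would observe that $\mu\in\wedge^2V^*\otimes V$ sits in degree corresponding to $\wedge^1(V\oplus V^*)$-piece with "one up, two down", while $\Delta\in V^*\otimes\wedge^2V$ has "two up, one down"; hence $\{\mu,\mu\}_\alpha\in\wedge^3V^*\otimes V$, $\{\Delta,\Delta\}_\alpha\in V^*\otimes\wedge^3V$, and the cross term $\{\mu,\Delta\}_\alpha\in\wedge^2V^*\otimes\wedge^2V$ — three different homogeneous components of $\wedge^\bullet(V\oplus V^*)$. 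Since these live in complementary summands, \eqref{eq:bialgebra} is equivalent to the simultaneous vanishing of all three. By Proposition~\ref{mumu}, $\{\mu,\mu\}_\alpha=0$ is exactly condition $(i)$, the hom-Jacobi identity; by Proposition~\ref{deltadelta}, $\{\Delta,\Delta\}_\alpha=0$ is exactly condition $(ii)$. So it remains only to show that $\{\mu,\Delta\}_\alpha=0$ is equivalent to the cocycle-type condition $(iii)$.

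For the remaining equivalence I would unwind $\{\mu,\Delta\}_\alpha=0$ by pairing against test elements. Following the pattern in the proofs of Propositions~\ref{mumu} and~\ref{deltadelta}, I would compute an iterated bracket
\[
\big\{\big\{\{\mu,\Delta\}_\alpha,\alpha^{-1}(x)\big\}_\alpha,\alpha^{-1}(y)\big\}_\alpha
\]
(possibly with appropriate powers of $\alpha$ inserted so the hom-degrees match), apply the hom-Jacobi-type identity \eqref{G2} to distribute the two interior-product-like operations, and identify the two families of terms that result: one family reproduces $\Delta(\mu(x,y))$ via formula \eqref{mu} combined with $\{\Delta,\cdot\}_\alpha$ being evaluation of $\Delta$, and the other two families reproduce $\ad^\mu_{\alpha(x)}\Delta(y)$ and $-\ad^\mu_{\alpha(y)}\Delta(x)$ after recognizing the action $\ad^\mu$ as the derivation extension encoded by $\{\mu,\cdot\}_\alpha$ acting on $\wedge^2 V$. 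Here I would use that $\{\mu,\cdot\}_\alpha$ together with the Leibniz rule \eqref{lb} reproduces, on $\wedge^2 V$, precisely $\ad^\mu_x(y\wedge z)=\mu(x,y)\wedge z+y\wedge\mu(x,z)$, up to twists by $\alpha$ that are absorbed into the $\alpha(x)$, $\alpha(y)$ appearing in $(iii)$.

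The main obstacle will be bookkeeping of the powers of $\alpha$ (and of $\alpha^*$ on the dual side) and the signs: the hom-big bracket inserts $\alpha^{-1}$ on inputs and $\alpha$ on outputs, so every time one nests brackets the twist accumulates, and one must insert compensating $\alpha^{-k}(x)$ arguments to land on the clean statement $\Delta(\mu(x,y))=\ad^\mu_{\alpha(x)}\Delta(y)-\ad^\mu_{\alpha(y)}\Delta(x)$ rather than some $\alpha$-conjugated variant. The conditions $\Ad_\alpha\mu=\mu$ and $\Ad_\alpha\Delta=\Delta$ are exactly what make this cleanup possible — they let one freely move $\alpha$'s past $\mu$ and $\Delta$ — so I would invoke \eqref{eq:mor} (and its $\Delta$-analogue from Proposition~\ref{deltadelta}) repeatedly. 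Once the $\alpha$-powers are correctly tracked, the identification of each term is routine, and the equivalence $\{\mu,\Delta\}_\alpha=0\iff(iii)$ follows, completing the proof together with the two cited propositions.
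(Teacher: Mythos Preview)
Your proposal is correct and follows exactly the paper's approach: split $\{\mu+\Delta,\mu+\Delta\}_\alpha$ into the three bidegree components, invoke Propositions~\ref{mumu} and~\ref{deltadelta} for $(i)$ and $(ii)$, and reduce $(iii)$ to $\{\mu,\Delta\}_\alpha=0$. In fact you go further than the paper, which simply states ``One can also prove that $\{\mu,\Delta\}_{\alpha}=0$ if and only if~(iii) holds. We omit details''; your sketch of the iterated-bracket computation for that last equivalence is the natural way to fill in what the paper leaves out.
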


\begin{proof}
 It is obvious that \eqref{eq:bialgebra} holds if and only if
\begin{gather*}
 \{\mu,\mu\}_{\alpha}=0,\qquad \{\Delta,\Delta\}_{\alpha}=0,\qquad \{\mu,\Delta\}_{\alpha}=0.
\end{gather*}
 By Propositions~\ref{mumu} and~\ref{deltadelta}, (i) and (ii) hold obviously. One can also prove that $\{\mu,\Delta\}_{\alpha}=0$ if and only if~(iii) holds. We omit details.
 \end{proof}

\begin{rmk}
The def\/inition of a hom-Lie bialgebra given above is the same as the one given in~\cite{Yao3}. However, to obtain the Manin triple theory, we need to follow the approach given in~\cite{sheng1}.
\end{rmk}

At the end of this section, we give the notions of a hom-Lie quasi-bialgebra and a hom-quasi-Lie bialgebra.
\begin{defi}Let $V$ be a vector space, $\alpha\in {\rm GL}(V)$, $\mu\colon \wedge^2 V\longrightarrow V$ and $\Delta \colon V\rightarrow\wedge^{2}V$ linear maps satisfying $\Ad_{\alpha}\mu=\mu$ and $\Ad_{\alpha}\Delta=\Delta$.
 \begin{itemize}\itemsep=0pt
 \item[(i)]
 The $5$-tuple $(V,\mu,\Delta,\alpha,\phi)$ is called a hom-Lie quasi-bialgebra, if
 \begin{gather}\label{quasi-lie}
 \{\phi+\mu+\Delta,\phi+\mu+\Delta\}_{\alpha}=0
 \end{gather}
 with $\phi\in\wedge^{3}V$ satisfying $ {\alpha}(\phi)=\phi$.

 \item[(ii)]
 The $5$-tuple $(V,\mu,\Delta,\psi,\alpha)$ is called a hom-quasi-Lie bialgebra, if
 \begin{gather*}
 \{\mu+\Delta+\psi,\mu+\Delta+\psi\}_{\alpha}=0
 \end{gather*}
 with $\psi\in\wedge^{3}V^{*}$ satisfying ${(\alpha^{-1})}^*(\psi)=\psi$.
 \end{itemize}
\end{defi}

Using the usual algebraic language, we have

\begin{pro}
With the above notations, a $5$-tuple $(V, \mu, \Delta, \alpha,\phi)$ is a hom-Lie quasi-bialgebra if and only if
\begin{itemize}\itemsep=0pt
\item[$(i)$]
$(V, \mu, \alpha)$ is a hom-Lie algebra;

\item[$(ii)$]
$ \Delta^{*}(\alpha^{*}\xi,\Delta^*(\eta,\delta))+{\rm c.p.}=\ad_{\phi(\cdot,(\alpha^2)^*\eta,(\alpha^2)^*\delta)}^*\alpha^*\xi+{\rm c.p.}$, for all $\xi,\eta,\gamma\in V^*$;

\item[$(iii)$]
$\Delta(\mu(x, y))=\ad^\mu_{\alpha(x)}\Delta(y)-\ad^\mu_{\alpha(y)}\Delta(x)$, for all $x,y\in V$;

\item[$(iv)$] $\dM_*\phi=0$,
where $\dM_*$ is given by \eqref{eq:newd} determined by $\Delta^*$.
\end{itemize}
\end{pro}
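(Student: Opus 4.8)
The plan is to expand the defining equation~\eqref{quasi-lie} according to the bidegree on $\wedge^\bullet(V\oplus V^{*})$ and to match each homogeneous component with one of the conditions (i)--(iv). First I would note that, in the decomposition $\wedge^{n}(V\oplus V^{*})=\oplus_{p+q=n}\wedge^{q+1}V^{*}\otimes\wedge^{p+1}V$, the three maps $\mu\in\wedge^{2}V^{*}\otimes V$, $\Delta\in V^{*}\otimes\wedge^{2}V$ and $\phi\in\wedge^{3}V=\wedge^{0}V^{*}\otimes\wedge^{3}V$ have bidegrees $(q,p)=(1,0)$, $(0,1)$ and $(-1,2)$, so each has total degree $p+q=1$. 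Hence $\phi+\mu+\Delta$ is homogeneous of degree $1$, and since~\eqref{comm} forces $\{u,v\}_{\alpha}=\{v,u\}_{\alpha}$ whenever $|u|=|v|=1$, equation~\eqref{quasi-lie} reads
\[
\{\mu,\mu\}_{\alpha}+\{\Delta,\Delta\}_{\alpha}+\{\phi,\phi\}_{\alpha}+2\{\mu,\Delta\}_{\alpha}+2\{\mu,\phi\}_{\alpha}+2\{\Delta,\phi\}_{\alpha}=0.
\]

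Next I would use that the hom-big bracket is additive for the bidegree $(q,p)$, which is visible from~\eqref{formula3} since $\{x,\xi\}_{\alpha}=\xi(\alpha^{-1}x)$ contracts exactly one $V$-leg against one $V^{*}$-leg. The six summands above then lie in bidegrees $(2,0)$, $(0,2)$, $(-2,4)$, $(1,1)$, $(0,2)$ and $(-1,3)$; but $\{\phi,\phi\}_{\alpha}=0$ because $\phi$ has no $V^{*}$-leg to contract, so the only coincidence is that $\{\Delta,\Delta\}_{\alpha}$ and $\{\mu,\phi\}_{\alpha}$ share bidegree $(0,2)$. Therefore~\eqref{quasi-lie} is equivalent to the four independent identities
\[
\{\mu,\mu\}_{\alpha}=0,\qquad \{\mu,\Delta\}_{\alpha}=0,\qquad \{\Delta,\Delta\}_{\alpha}+2\{\mu,\phi\}_{\alpha}=0,\qquad \{\Delta,\phi\}_{\alpha}=0.
\]

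Then I would translate each identity. Using $\Ad_{\alpha}\mu=\mu$, Proposition~\ref{mumu} turns $\{\mu,\mu\}_{\alpha}=0$ into the hom-Jacobi identity, that is, into~(i). The identity $\{\mu,\Delta\}_{\alpha}=0$ is handled exactly as in the hom-Lie bialgebra characterization above: evaluating $\{\mu,\Delta\}_{\alpha}$ on a pair of vectors by~\eqref{lb} and~\eqref{formula3} reproduces, up to a nonzero constant, the expression $\Delta(\mu(x,y))-\ad^{\mu}_{\alpha(x)}\Delta(y)+\ad^{\mu}_{\alpha(y)}\Delta(x)$ whose vanishing is condition~(iii). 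For $\{\Delta,\phi\}_{\alpha}=0$ I would expand $\{\Delta,\phi\}_{\alpha}$ via the derivation rule~\eqref{lb} together with~\eqref{formula1}--\eqref{formula2} and recognise the outcome, up to sign, as $\dM_{*}\phi$, where $\dM_{*}$ is the operator of~(iv), i.e.\ the version of~\eqref{eq:newd} built from $\Delta^{*}$ with trivial coefficients; so this identity is precisely~(iv). Finally, in bidegree $(0,2)$ I would compute the two summands separately: contracted against suitably $\alpha^{*}$-twisted covectors, $\{\Delta,\Delta\}_{\alpha}$ yields the co-hom-Jacobiator $\Delta^{*}(\alpha^{*}\xi,\Delta^{*}(\eta,\delta))+\mathrm{c.p.}$ exactly as in the proof of Proposition~\ref{deltadelta}, while a direct expansion of $\{\mu,\phi\}_{\alpha}$ through~\eqref{formula3} gives $\ad^{*}_{\phi(\cdot,(\alpha^{2})^{*}\eta,(\alpha^{2})^{*}\delta)}\alpha^{*}\xi+\mathrm{c.p.}$; hence the identity $\{\Delta,\Delta\}_{\alpha}+2\{\mu,\phi\}_{\alpha}=0$ becomes, up to a common nonzero factor, the equality of the two sides of~(ii). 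Throughout, the hypotheses $\Ad_{\alpha}\mu=\mu$, $\Ad_{\alpha}\Delta=\Delta$ and $\alpha(\phi)=\phi$ are needed, both to invoke Propositions~\ref{mumu} and~\ref{deltadelta} and to keep the various $\alpha$-twists consistent.

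The hard part will be this bidegree-$(0,2)$ component: it is the only one that genuinely couples two of the structure maps, and it demands careful tracking of the powers of $\alpha$ and $\alpha^{*}$ that the $\alpha^{-1}$'s in~\eqref{formula3} and the dualization introduce --- precisely what produces the $\alpha^{*}$ and $(\alpha^{2})^{*}$ twists in~(ii) --- together with the cyclic-permutation signs. Identifying $\{\mu,\phi\}_{\alpha}$ with the coadjoint term of~(ii) is the most calculation-heavy step; everything else reduces to bidegree bookkeeping and the already-established Propositions~\ref{mumu} and~\ref{deltadelta}, so, as with the analogous statements in this section, the routine verifications would be left to the reader.
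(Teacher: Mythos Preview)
Your approach is correct and is essentially the same as the paper's: expand \eqref{quasi-lie} by bidegree, note $\{\phi,\phi\}_{\alpha}=0$, and obtain the four independent identities $\{\mu,\mu\}_{\alpha}=0$, $\tfrac{1}{2}\{\Delta,\Delta\}_{\alpha}+\{\mu,\phi\}_{\alpha}=0$, $\{\mu,\Delta\}_{\alpha}=0$, $\{\Delta,\phi\}_{\alpha}=0$, then identify them with (i)--(iv). The paper's proof is in fact much terser than yours --- it simply records these four equations and declares the correspondence, leaving all the translations (including the delicate bidegree-$(0,2)$ calculation you flag) to the reader.
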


\begin{proof}
 By \eqref{quasi-lie}, we have
\begin{gather*}
\{\mu,\mu\}_\alpha=0,\qquad \tfrac{1}{2}\{\Delta,\Delta\}_\alpha+\{\mu,\phi\}_\alpha=0,\qquad \{\mu,\Delta\}_\alpha=0,\qquad \{\Delta,\phi\}_\alpha=0,
\end{gather*}
which gives (i)--(iv) respectively.
\end{proof}

Similarly, we have
\begin{pro}
With the above notations, a $5$-tuple $(V, \mu, \Delta, \alpha,\psi)$ is a hom-quasi-Lie bialgebra if and only if
\begin{itemize}\itemsep=0pt
\item[$(i)$]
$\mu(\alpha(x),\mu(y,z))+{\rm c.p.}=\ad_{\psi(\cdot,\alpha^{2}x,\alpha^{2}y)}^*\alpha (z)+{\rm c.p.}$, for all $x,y,z\in V$;

\item[$(ii)$]
$(V^*, \Delta^*, \alpha^*)$ is a hom-Lie algebra;

\item[$(iii)$]
$\Delta(\mu(x, y))=\ad^\mu_{\alpha(x)}\Delta(y)-\ad^\mu_{\alpha(y)}\Delta(x)$, for all $x,y\in V$;

\item[$(iv)$] $\dM\psi=0$,
where $\dM$ is given by~\eqref{eq:newd} determined by~$\mu$.
\end{itemize}
\end{pro}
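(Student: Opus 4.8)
The plan is to expand the defining identity $\{\mu+\Delta+\psi,\mu+\Delta+\psi\}_{\alpha}=0$ by bilinearity of the hom-big bracket and the graded-commutative relation~\eqref{comm}. Since $\mu\in\wedge^{2}V^{*}\otimes V$, $\Delta\in V^{*}\otimes\wedge^{2}V$ and $\psi\in\wedge^{3}V^{*}$ all have degree~$1$, the cross terms symmetrize and one gets
\begin{gather*}
\{\mu,\mu\}_{\alpha}+\{\Delta,\Delta\}_{\alpha}+\{\psi,\psi\}_{\alpha}+2\{\mu,\Delta\}_{\alpha}+2\{\mu,\psi\}_{\alpha}+2\{\Delta,\psi\}_{\alpha}=0.
\end{gather*}
I would then separate this according to the bigrading $\wedge^{n}(V\oplus V^{*})=\oplus_{p+q=n}(\wedge^{q+1}V^{*}\otimes\wedge^{p+1}V)$. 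Because $\{\cdot,\cdot\}_{\alpha}$ contracts exactly one copy of $V$ with one copy of $V^{*}$, the six terms occupy pairwise distinct graded pieces, except that $\{\mu,\mu\}_{\alpha}$ and $\{\Delta,\psi\}_{\alpha}$ both lie in $\wedge^{3}V^{*}\otimes V$; moreover $\{\psi,\psi\}_{\alpha}=0$ automatically, since $\psi$ has no $V$-component for the bracket to act on (this follows from property~(ii) of the hom-big bracket together with the Leibniz rule~\eqref{lb}). Hence the defining identity is equivalent to the system
\begin{gather*}
\{\Delta,\Delta\}_{\alpha}=0,\qquad \tfrac{1}{2}\{\mu,\mu\}_{\alpha}+\{\Delta,\psi\}_{\alpha}=0,\qquad \{\mu,\Delta\}_{\alpha}=0,\qquad \{\mu,\psi\}_{\alpha}=0,
\end{gather*}
which I would match with~(ii),~(i),~(iii),~(iv) respectively.

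Three of these cost nothing new. By Proposition~\ref{deltadelta}, $\{\Delta,\Delta\}_{\alpha}=0$ is equivalent to $\Delta^{*}$ satisfying the hom-Jacobi identity with respect to $\alpha^{*}$, which is~(ii). The equation $\{\mu,\Delta\}_{\alpha}=0$ is equivalent to the compatibility~(iii) by exactly the computation already used in the proof of the earlier characterization of hom-Lie bialgebras. For $\{\mu,\psi\}_{\alpha}=0$ I would unwind $\{\mu,\psi\}_{\alpha}\in\wedge^{4}V^{*}$ against test vectors using formula~\eqref{formula2} and the interior-product identities of Lemma~\ref{interior}, and recognize the result, up to an overall sign and using $\Ad_{\alpha}\mu=\mu$, as $\dM\psi$ for the coboundary operator~\eqref{eq:newd} determined by $\mu$ with $\psi$ viewed as a scalar-valued cochain (the trivial representation); this is the scalar specialization of the consistency between the hom-big bracket and the hom-Nijenhuis--Richardson bracket, Proposition~\ref{pro:consistant}. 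So $\{\mu,\psi\}_{\alpha}=0$ is exactly $\dM\psi=0$, namely~(iv).

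The only genuinely new computation is $\tfrac{1}{2}\{\mu,\mu\}_{\alpha}+\{\Delta,\psi\}_{\alpha}=0$. Following the proof of Proposition~\ref{mumu}, I would pair both brackets with the triple $\alpha^{-1}(x),y,\alpha(z)$ in the nested pattern used there: it is shown in that proof that this pairing sends $\tfrac{1}{2}\{\mu,\mu\}_{\alpha}$ to $-(\mu(\alpha(x),\mu(y,z))+{\rm c.p.})$. Expanding $\{\Delta,\psi\}_{\alpha}$ by~\eqref{formula3} and Lemma~\ref{interior} and tracking the powers of $\alpha$, I would identify the same pairing of $\{\Delta,\psi\}_{\alpha}$ with $\ad^{*}_{\psi(\cdot,\alpha^{2}x,\alpha^{2}y)}\alpha(z)+{\rm c.p.}$, where $\ad^{*}$ is the coadjoint action of the hom-Lie algebra $(V^{*},\Delta^{*},\alpha^{*})$ on $V$; adding and equating to zero yields~(i). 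I expect the main obstacle to be precisely this identification: the interior products $i^{\alpha}_{\bullet}$ and the twists $(\alpha^{-1})^{*}$ and $\alpha$ in~\eqref{formula3} introduce a cascade of $\alpha$-shifts that must be reconciled with the explicit powers $\alpha^{2}$ and $\alpha$ in~(i), and the cyclic sums and signs must be normalized consistently. Beyond this bookkeeping, which parallels the dual argument for hom-Lie quasi-bialgebras, there is no conceptual difficulty, so I would present the core matching and suppress the routine parts, as elsewhere in the paper.
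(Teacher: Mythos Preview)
Your proposal is correct and follows exactly the approach the paper takes: the paper gives no explicit proof for this proposition, writing only ``Similarly, we have'' after the dual case of hom-Lie quasi-bialgebras, whose proof consists of the one-line decomposition you reproduce. Your bigrading analysis, the identification of the four independent equations, and the matching with~(i)--(iv) via Propositions~\ref{mumu}, \ref{deltadelta} and the bialgebra compatibility computation are all on target; in fact you supply considerably more detail than the paper does.
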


\section[Hom-Nijenhuis operators and hom-$\huaO$-operators]{Hom-Nijenhuis operators and hom-$\boldsymbol{\huaO}$-operators}\label{section6}

In this section, we give the notion of a hom-Nijenhuis operator using the hom-big bracket. We show that a hom-Nijenhuis operator gives rise to a trivial deformation. Furthermore, a new def\/inition of a Hom-$\huaO$-operator is given.

\begin{defi}
Let $(V,\mu,\alpha)$ be a regular hom-Lie algebra. A~linear map $N\colon V\longrightarrow V$ satisfying $\Ad_\alpha N=N$ is called a hom-Nijenhuis operator if
\begin{gather*}
\{N,\{N,\mu\}_{\alpha}\}_{\alpha}-\{N\circ N,\mu\}_{\alpha}=0,
\end{gather*}
where $\circ$ is def\/ined by~\eqref{hom-NR brackethalf}.
\end{defi}

The next proposition characterizes a hom-Nijenhuis operator using the usual algebraic formula. To be simple, we write $\mu(x,y)$ by $[x,y]$ in the sequel.

\begin{pro}\label{hom-Nijenhuis operator}
Let $(V,\mu,\alpha)$ be a regular hom-Lie algebra. A linear map $N\colon V\longrightarrow V$ satisfying $\Ad_\alpha N=N$ is a hom-Nijenhuis operator if and only if
\begin{gather}\label{Nijenhuis}
[Nx,Ny]=N\big[N\alpha^{-1}x,y\big]+N\big[x,N\alpha^{-1}y\big]-N^{2}\big[\alpha^{-1}x,\alpha^{-1}y\big],\qquad\forall \, x,y\in V.
\end{gather}
\end{pro}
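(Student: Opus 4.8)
The plan is to expand both sides of the defining equation $\{N,\{N,\mu\}_{\alpha}\}_{\alpha}-\{N\circ N,\mu\}_{\alpha}=0$ into the usual algebraic language by repeatedly applying the formula \eqref{mu} of Proposition~\ref{mumu}, which lets us recover a bracket-valued map from nested hom-big brackets. Concretely, since $N\in V^*\otimes V$ has degree $0$, the element $\{N,\{N,\mu\}_{\alpha}\}_{\alpha}-\{N\circ N,\mu\}_{\alpha}$ lies in $\wedge^2V^*\otimes V$, i.e.\ it is (the tensor of) a skew-symmetric bilinear map $V\times V\to V$. So the first step is to test it on elements of the form $\alpha^{-1}(x)$ and $y$ via the pairing $T(x,y)=-\{\{T,\alpha^{-1}(x)\}_{\alpha},y\}_{\alpha}$, exactly as in \eqref{mu}, and then push the $\{\cdot,\alpha^{-1}(x)\}_{\alpha}$ and $\{\cdot,y\}_{\alpha}$ operations inward using the graded $\Ad_\alpha$-derivation property \eqref{lb} together with the graded hom-Jacobi identity \eqref{G2}.

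The key computational identities I would use are: first, $\{N,x\}_{\alpha}=Nx$ up to appropriate twists by $\alpha$ (compute directly from the local coordinate expression $N=\sum N^i_j e^j\otimes e_i$, as was done for $\Delta$ in Proposition~\ref{deltadelta}); second, $\{\mu,\alpha^{-1}(x)\}_{\alpha}$ is, by \eqref{mu}, essentially the operator $y\mapsto [x,y]=\mu(x,y)$ (viewed in $V^*\otimes V$); and third, the compatibility $\Ad_\alpha N=N$, i.e.\ $N\alpha=\alpha N$, which will be used constantly to move $\alpha^{\pm1}$ past $N$ and to simplify the powers of $\alpha$ that accumulate. For the term $\{N\circ N,\mu\}_{\alpha}$ I would use that the composition $N\circ N$ from \eqref{hom-NR brackethalf} equals $\alpha N\alpha^{-1}N\alpha^{-1}=N^2\alpha^{-1}$ (again using $N\alpha=\alpha N$), so this term contributes the $N^2[\alpha^{-1}x,\alpha^{-1}y]$ piece on the right-hand side of \eqref{Nijenhuis} after testing against $\alpha^{-1}(x)$ and $y$. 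The term $\{N,\{N,\mu\}_{\alpha}\}_{\alpha}$, when fully expanded by the derivation rule, produces three groups of terms: one with both copies of $N$ applied ``outside'' giving $[Nx,Ny]$, and two with a single $N$ applied to an argument giving $N[N\alpha^{-1}x,y]$ and $N[x,N\alpha^{-1}y]$; the $\alpha^{-1}$'s there are forced by the $i^\alpha_x$ interior products in \eqref{formula1}--\eqref{formula3}.

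I expect the main obstacle to be bookkeeping of the $\alpha$-twists: each application of $\{\cdot,\cdot\}_{\alpha}$ and each interior product $i^\alpha_x$ introduces factors of $\alpha^{\pm1}$ and $(\alpha^{-1})^*$, and one must carefully track where they land so that, after using $\Ad_\alpha N=N$ and $\Ad_\alpha\mu=\mu$ to absorb them, the surviving powers of $\alpha^{-1}$ appear exactly as in \eqref{Nijenhuis}. The sign bookkeeping (the $(-1)^{|u||v|}$ factors in \eqref{comm} and \eqref{lb}, and the $(-1)^{k+1}$ in \eqref{eq:newd}) is a second source of care, though since all the degrees here are $0$ or $1$ the signs are manageable. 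A clean way to organize the argument is to first establish, as a lemma-style computation, that for $N$ with $\Ad_\alpha N=N$ one has $\{\{N,\alpha^{-1}(x)\}_{\alpha},\{N,\mu\}_{\alpha}\}_{\alpha}$-type expressions collapse to $N$ applied to $\mu$-brackets with the predicted twists, and then simply add up the three contributions and subtract the $N^2\alpha^{-1}$ term. Once the twisted algebraic identity is written down, matching it term-by-term with \eqref{Nijenhuis} is immediate, giving the ``if and only if.''
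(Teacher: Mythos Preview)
Your proposal is correct and follows essentially the same route as the paper: test the degree-$1$ element $T=\{N,\{N,\mu\}_{\alpha}\}_{\alpha}-\{N\circ N,\mu\}_{\alpha}$ by nested hom-big brackets, expand via the graded hom-Jacobi identity \eqref{G2}, and use $\{N,y\}_{\alpha}=-N(y)$ together with \eqref{mu}. The paper pairs with $x$ and $\alpha y$ rather than $\alpha^{-1}x$ and $y$, but since $\Ad_\alpha T=T$ these are equivalent.

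One caveat: your informal term count for $\{N,\{N,\mu\}_{\alpha}\}_{\alpha}$ is too optimistic. When you push the two insertions through, you do not get just the three terms $[Nx,Ny]$, $N[N\alpha^{-1}x,y]$, $N[x,N\alpha^{-1}y]$; you also pick up cross-terms like $[\alpha x,N^{2}\alpha^{-1}y]$, $[N^{2}\alpha^{-1}x,\alpha y]$ and an extra $N^{2}[\alpha^{-1}x,\alpha^{-1}y]$, and each of the ``good'' terms appears twice. It is precisely the three terms coming from $\{N\circ N,\mu\}_{\alpha}$ (which, as you note, uses $N\circ N=N^{2}\alpha^{-1}$) that kill these extras, leaving $2$ times the Nijenhuis expression. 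So the $N\circ N$ piece does more than contribute the single $N^{2}[\alpha^{-1}x,\alpha^{-1}y]$ term; be prepared for that when you carry out the bookkeeping.
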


\begin{proof}
By the fact $\Ad_\alpha N=N$, we have
\begin{gather}\label{N}
-\{N,y\}_{\alpha}=[N,y]_{\alpha}=N\circ y=(\Ad_{\alpha}N)(y)=N(y).
\end{gather}
Therefore, we have
\begin{gather}
-\big\{\big\{\{N,\mu\}_{\alpha},\alpha^{-1}x\big\}_{\alpha},y\big\}_{\alpha} = -\big\{\{N,\big\{\mu,\alpha^{-2}x\big\}_{\alpha}\big\}_{\alpha},y\}_{\alpha}
+\big\{\big\{\big\{N,\alpha^{-2}x\big\}_{\alpha},\mu\big\}_{\alpha},y\big\}_{\alpha}\nonumber\\
\qquad{}
 = -\big\{N,\big\{\{\mu,\alpha^{-2}x\big\}_{\alpha},\alpha^{-1}y\big\}_{\alpha}\}_{\alpha}-
 \big\{\big\{N,\alpha^{-1}y\big\}_{\alpha},\big\{\mu,\alpha^{-1}x\big\}_{\alpha}\big\}_{\alpha}\nonumber\\
\qquad\quad{} +\big\{\big\{N,\alpha^{-1}x\big\}_{\alpha},\big\{\mu,\alpha^{-1}y\big\}_{\alpha}\big\}_{\alpha}-
\big\{\big\{\big\{N,\alpha^{-2}x\big\}_{\alpha},\alpha^{-1}y\big\}_{\alpha},\mu\big\}_{\alpha}\nonumber\\
\qquad{} = -N\big[\alpha^{-1}x,\alpha^{-1}y\big]+\big[x,N\alpha^{-1}y\big]+\big[N\alpha^{-1}x,y\big].\label{n1}
\end{gather}
By \eqref{N} and \eqref{n1}, we have
\begin{gather}\label{N1}
\big\{N, \big\{\big\{\{N,\mu\}_{\alpha},\alpha^{-1}x\big\}_{\alpha},y\big\}_{\alpha}\big\}_{\alpha}
=N\big[N\alpha^{-1}x,y\big]+N\big[x,N\alpha^{-1}y\big]-N^{2}\big[\alpha^{-1}x,\alpha^{-1}y\big].
\end{gather}
By \eqref{mu}, \eqref{N}--\eqref{N1}, we have
\begin{gather*}
 \{\{\{N,\{N,\mu\}_{\alpha}\}_{\alpha}
-\{N\circ N,\mu\}_{\alpha},x\}_{\alpha}, \alpha y\}_{\alpha}
 = \{\{\{N,\{N,\mu\}_{\alpha}\}_{\alpha},x\}_{\alpha}, \alpha y\}_{\alpha} \\
 \qquad\quad{}
 -\big\{\big\{N\circ N,\big\{\mu,\alpha^{-1}x\big\}_{\alpha}\big\}_{\alpha},\alpha y\big\}_{\alpha}
+\big\{\big\{\big\{N\circ N,\alpha^{-1}x\big\}_{\alpha},\mu\big\}_{\alpha},\alpha y\big\}_{\alpha} \\
\qquad{}
 = \big\{\big\{N,\big\{\{N,\mu\}_{\alpha},\alpha^{-1}x\big\}_{\alpha}\big\}_{\alpha}, \alpha y\big\}_{\alpha}
-\big\{\big\{\big\{N,\alpha^{-1}x\big\}_{\alpha},\{N,\mu\}_{\alpha}\big\}_{\alpha},\alpha y\big\}_{\alpha} \\
\qquad\quad{}
 -\big\{N\circ N,\big\{\big\{\mu,\alpha^{-1}x\big\}_{\alpha},y\big\}_{\alpha}\big\}_{\alpha}
-\{\{N\circ N,y\}_{\alpha},\{\mu,x\}_{\alpha}\}_{\alpha}+[N^{2}(\alpha^{-1}x),\alpha y] \\
\qquad {}
 = \big\{N,\big\{\big\{\{N,\mu\}_{\alpha},\alpha^{-1}x\big\}_{\alpha},y\big\}_{\alpha}\big\}_{\alpha}
 +\{\{N,y\}_{\alpha},\{\{N,\mu\}_{\alpha},x\}_{\alpha}\}_{\alpha} \\
 \qquad\quad{}
 +\big\{\big\{\{N,\mu\}_{\alpha},N\big(\alpha^{-1}x\big)\big\}_{\alpha},\alpha y\big\}_{\alpha}
 -N^{2}\big[\alpha^{-1}x,\alpha^{-1}y\big]\\
 \qquad\quad{}
 +\big[\alpha x,N^{2}\alpha^{-1}y\big]+\big[N^{2}\alpha^{-1}x,\alpha y\big] \\
\qquad{} = \big({-}N^{2}\big[\alpha^{-1}x,\alpha^{-1}y\big]+N\big[x,N\alpha^{-1}y\big]+N\big[N\alpha^{-1}x,y\big]\big) \\
 \qquad\quad{}
 +\big(N\big[x,N\alpha^{-1}y\big]-\big[\alpha x,N^{2}\alpha^{-1}y\big]-[Nx,Ny]) \\
 \qquad\quad{}
 +\big(N\big[N\alpha^{-1}x,y\big]-[Nx,Ny]-\big[N^{2}\alpha^{-1}x,\alpha y\big]\big) \\
 \qquad\quad{}
 -N^{2}\big[\alpha^{-1}x,\alpha^{-1}y\big]+\big[\alpha x,N^{2}\alpha^{-1}y\big]+
 \big[N^{2}\alpha^{-1}x,\alpha y\big] \\
 \qquad{}
 = 2\big(N\big[N\alpha^{-1}x,y\big]+N\big[x,N\alpha^{-1}y\big]-N^{2}\big[\alpha^{-1}x,\alpha^{-1}y\big]-[Nx,Ny]).
\end{gather*}
Therefore, $N$ is a hom-Nijenhuis operator if and only if~\eqref{Nijenhuis} holds.
\end{proof}

\begin{rmk}\label{rmk:Nijenhuis}
The def\/inition of a hom-Nijenhuis operator given above is dif\/ferent from the one given in~\cite{sheng3}. In~\cite{sheng3}, a hom-Nijenhuis operator on a~hom-Lie algebra $(V,[\cdot,\cdot],\alpha)$ is def\/ined to be a linear map $N\colon V\longrightarrow V$ satisfying $\alpha\circ N=N\circ \alpha$ and the following integrability condition
\begin{gather}\label{eq:Nijenhuisold}
[Nx,Ny]=N[Nx,y]+N[x,Ny]-N^{2}[ x, y].
\end{gather}

Comparing with \eqref{Nijenhuis}, \eqref{eq:Nijenhuisold} does not contain the information about the homomorphism~$\alpha$. Thus, We believe that the current def\/inition is more reasonable. This justif\/ies the usage of the hom-big bracket.
\end{rmk}

Now we consider deformations of a hom-Lie algebra.
Let $(V,[\cdot,\cdot],\alpha)$ be a hom-Lie algebra and $\omega\in \Hom(\wedge^{2}V,V)$, def\/ine
\begin{gather*}
[x,y]_{t}:=[x,y]+t\omega(x,y),\qquad t\in\mathbb R.
\end{gather*}
For all $t$, $(V,[\cdot,\cdot]_{t},\alpha)$ is a hom-Lie algebra if and only if
\begin{gather}
\Ad_\alpha\omega = \omega,\nonumber\\
\label{wjacobi}
[\omega(x,y),\alpha(z)]+\omega([x,y],\alpha(z))+{\rm c.p.}(x,y,z) = 0,\\
\label{eq:dw0}
\omega(\omega(x,y),\alpha(z))+{\rm c.p.}(x,y,z) = 0.
\end{gather}
Note that \eqref{wjacobi} is equivalent to $(\dM\omega)(\alpha x,\alpha y,\alpha z) =0$ and~\eqref{eq:dw0} is equivalent to $[\omega,\omega]_{\alpha}(\alpha x,\alpha y,\alpha z)$ $=0$. That is to say, $(V,[\cdot,\cdot]_{t},\alpha)$ is a hom-Lie algebra for all $t$ if and only if $(V,\omega,\alpha)$ is a hom-Lie algebra and $\dM \omega=0$. In this case, we say that $\omega$ generates a 1-parameter inf\/initesimal deformation.

\begin{defi}
A deformation is said to be trivial, if there exists a linear operator $N\colon V\longrightarrow V$ satisfying $\Ad_\alpha N=N$ such that
 \begin{gather}
 \label{trivial deformation}
 (\alpha+tN)[x,y]_{t}=[(\alpha+tN)(x),(\alpha+tN)(y)].
 \end{gather}
\end{defi}
The condition \eqref{trivial deformation} is equivalent to
\begin{gather*}
\omega(x,y) = \big[x,N\alpha^{-1}y\big]+\big[N\alpha^{-1}x,y\big]-N\big[\alpha^{-1}x,\alpha^{-1}y\big],\\
N(\omega(x,y)) = [Nx,Ny].
\end{gather*}
Therefore, $N$ is a hom-Nijenhuis operator. Thus, a trivial deformation gives rise to a hom-Nijenhuis operator. The converse is also true.

\begin{thm}
Let $N$ be a hom-Nijenhuis operator. Then a deformation can be obtained by putting
\begin{gather*}
\omega(x,y)=-\big\{\big\{\{N,\mu\}_{\alpha},\alpha^{-1}x\big\}_\alpha,y\big\}_{\alpha}.
\end{gather*}
Furthermore, this deformation is trivial.
\end{thm}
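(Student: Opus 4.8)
The plan is to verify that the map $\omega$ so defined satisfies the two conditions characterizing a $1$-parameter infinitesimal deformation, namely $\Ad_\alpha\omega=\omega$ together with the hom-Jacobi condition \eqref{wjacobi} and the cocycle-type condition \eqref{eq:dw0}, and then to exhibit the same $N$ as the linear operator realizing triviality via \eqref{trivial deformation}. The starting observation is formula \eqref{n1} from the proof of Proposition~\ref{hom-Nijenhuis operator}: unwinding the definition of $\omega$ gives exactly
\begin{gather*}
\omega(x,y)=-\big\{\big\{\{N,\mu\}_{\alpha},\alpha^{-1}x\big\}_\alpha,y\big\}_{\alpha}=N\big[N\alpha^{-1}x,y\big]-\textrm{(no, rather)}\ \big[N\alpha^{-1}x,y\big]+\big[x,N\alpha^{-1}y\big]-N\big[\alpha^{-1}x,\alpha^{-1}y\big],
\end{gather*}
so that $\omega$ coincides with the first of the two displayed equations characterizing a trivial deformation. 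Hence the real content is: (a) this $\omega$ really is a hom-Lie $2$-cocycle generating a deformation, and (b) $N\omega(x,y)=[Nx,Ny]$, which is precisely the hom-Nijenhuis identity \eqref{Nijenhuis} combined with the formula for $\omega$. Point (b) is immediate from Proposition~\ref{hom-Nijenhuis operator}.

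First I would establish $\Ad_\alpha\omega=\omega$. Since $\Ad_\alpha N=N$ and $\Ad_\alpha\mu=\mu$, and $\Ad_\alpha$ is an automorphism of the graded hom-Lie algebra $(C(V,V),[\cdot,\cdot]_\alpha,\Ad_\alpha)$ (Corollary~\ref{cor:hom-Lie}) which also intertwines the $\circ$-composition \eqref{eq:right1}, applying $\Ad_\alpha$ to the defining expression for $\omega$ and pushing it through each bracket shows $\Ad_\alpha\omega=\omega$; this is the routine part. Next, to get \eqref{wjacobi} and \eqref{eq:dw0} I would argue structurally rather than by brute expansion: the whole point of the hom-big-bracket formalism is that $\dM\omega=0$ and $[\omega,\omega]_\alpha=0$ can be read off from graded-Jacobi identities. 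Concretely, $\omega=-\dM(\text{something})$-like expressions — indeed $\{N,\mu\}_\alpha$ up to sign is $\dM N$ in the cohomology \eqref{eq:newd} — and the hom-Nijenhuis condition $\{N,\{N,\mu\}_\alpha\}_\alpha=\{N\circ N,\mu\}_\alpha$ is exactly what forces the "square" of $\omega$ to vanish. So I would compute $\{\mu,\omega\}_\alpha$ and $\{\omega,\omega\}_\alpha$ (evaluated on $\alpha x,\alpha y,\alpha z$, as in the statements \eqref{wjacobi}, \eqref{eq:dw0}) by repeatedly invoking \eqref{G2}, \eqref{mu}, \eqref{N}, and the hom-Nijenhuis relation, mirroring the long computation already done in the proof of Proposition~\ref{hom-Nijenhuis operator}. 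In fact one expects the identity
\begin{gather*}
\{\{\{\mu,\omega\}_\alpha,x\}_\alpha,\alpha y\}_\alpha+\tfrac12\{\{\{\{\omega,\omega\}_\alpha,\alpha^{-1}x\}_\alpha,y\}_\alpha,\alpha z\}_\alpha=0
\end{gather*}
to fall out of the hom-Nijenhuis condition, which simultaneously delivers \eqref{wjacobi} (the $\{\mu,\omega\}_\alpha$ piece) and \eqref{eq:dw0} (the $\{\omega,\omega\}_\alpha$ piece); alternatively, knowing $(V,\omega,\alpha)$ is a hom-Lie algebra and $\dM\omega=0$ directly gives what we need.

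Finally, for triviality I would simply check \eqref{trivial deformation} with this very $N$. Expanding $(\alpha+tN)[x,y]_t=(\alpha+tN)([x,y]+t\omega(x,y))$ and the right-hand side $[(\alpha+tN)x,(\alpha+tN)y]$ and matching powers of $t$: the $t^0$ term is $\alpha[x,y]=[\alpha x,\alpha y]$, which holds since $\Ad_\alpha\mu=\mu$; the $t^1$ term is $\alpha\omega(x,y)+N[x,y]=[\alpha x,Ny]+[Nx,\alpha y]$, which rearranges to the formula $\omega(x,y)=[x,N\alpha^{-1}y]+[N\alpha^{-1}x,y]-N[\alpha^{-1}x,\alpha^{-1}y]$ after applying $\alpha^{-1}$ and using $\Ad_\alpha N=N$, i.e. the very definition of $\omega$; and the $t^2$ term is $N\omega(x,y)=[Nx,Ny]$, which is \eqref{Nijenhuis} via Proposition~\ref{hom-Nijenhuis operator}. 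There is no $t^3$ term since $[\cdot,\cdot]_t$ is linear in $t$. The main obstacle is the bookkeeping in the middle step — organizing the nested hom-big-bracket computation showing $\{\mu+\omega,\mu+\omega\}_\alpha$ vanishes in the relevant degrees — but since this is a near-verbatim rerun of the computation in Proposition~\ref{hom-Nijenhuis operator} with $\mu$ replaced by $\mu+t\omega$, it presents no genuine difficulty beyond care with signs and the powers of $\alpha^{\pm1}$; I would therefore mostly refer back to that computation and to Propositions~\ref{mumu} and~\ref{hom-Nijenhuis operator} rather than redo it in full.
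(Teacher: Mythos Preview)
Your proposal is correct and follows essentially the same route as the paper: identify $\omega$ with $\dM N$ (via \eqref{n1} and \eqref{eq:newd}) so that \eqref{wjacobi} is automatic from $\dM^2=0$, deduce $\Ad_\alpha\omega=\omega$ from $\Ad_\alpha N=N$ and $\Ad_\alpha\mu=\mu$, and derive the hom-Jacobi identity \eqref{eq:dw0} for $\omega$ from the Nijenhuis condition \eqref{Nijenhuis}; the paper in fact omits the details of this last step and of the triviality check, whereas your $t$-expansion for \eqref{trivial deformation} spells it out cleanly. The one cosmetic wrinkle is your displayed ``combined'' identity mixing a two-argument and a three-argument expression, which is malformed as written --- but since you immediately note the cleaner alternative (show $(V,\omega,\alpha)$ is hom-Lie and $\dM\omega=0$ separately, exactly as the paper does), this is not a gap.
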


\begin{proof}
 Obviously, $\omega=\dM N$. Therefore, \eqref{wjacobi} holds naturally. By $\Ad_\alpha N=N$ and $\Ad_\alpha\mu=\mu$, we can deduce that $\Ad_\alpha\omega=\omega$. Finally, we need to check the hom-Jacobi identity for $\omega$, which follows from the Nijenhuis condition~\eqref{Nijenhuis}. We omit details. Therefore, $\omega$ generates a trivial deformation.
Furthermore, also by~\eqref{Nijenhuis}, it is straightforward to see that~\eqref{trivial deformation} holds. Thus, this deformation is trivial.
\end{proof}

As in the classical case, any polynomial of a Nijenhuis operator is still a Nijenhuis operator. The following formula can be obtained by straightforward computations.

\begin{lem}\label{nijenhuisthm}
Let $N$ be a hom-Nijenhuis operator acting on a hom-Lie algebra $(V,[\cdot,\cdot],\alpha)$. Then for all $ i, j\in\mathbb{N}$, there holds
\begin{gather*}
\big[N^{i}x,N^{j}y\big]-N^{i}\big[x,N^j\alpha^{-i}y\big]-N^{j}\big[N^{i}\alpha^{-j}x,y\big]
+N^{i+j}\big[\alpha^{-j}x,\alpha^{-i}y\big]=0,\qquad \forall\, x,y\in V.
\end{gather*}
\end{lem}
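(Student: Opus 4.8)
The plan is to prove Lemma~\ref{nijenhuisthm} by induction. Since the hom-Nijenhuis condition \eqref{Nijenhuis} is exactly the case $i=j=1$ of the claimed identity, the natural strategy is a double induction on $i$ and $j$, reducing the general $(i,j)$ case to cases with smaller exponents by peeling off one factor of $N$ at a time. First I would fix the hom-Lie algebra $(V,[\cdot,\cdot],\alpha)$ and the hom-Nijenhuis operator $N$, and record the standing relations $\Ad_\alpha N=N$ (equivalently $N\alpha=\alpha N$) and $\Ad_\alpha\mu=\mu$ (equivalently $[\alpha x,\alpha y]=\alpha[x,y]$). These commutation facts let me freely move powers of $\alpha^{\pm 1}$ and $N$ past each other, which will be needed constantly to match up the shifted arguments.

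The induction scheme I would use: for fixed $j=1$, induct on $i$; once the identity is known for all $(i,1)$, induct on $j$ for each fixed $i$. For the step from $(i,1)$ to $(i+1,1)$, I would start with $[N^{i+1}x,Ny]=[N(N^i x),Ny]$ and apply the base case \eqref{Nijenhuis} with $N^i x$ in place of $x$ (legitimate since $\Ad_\alpha N=N$ ensures $N^i x$ behaves correctly under $\alpha$), obtaining
\begin{gather*}
[N^{i+1}x,Ny]=N\big[N\alpha^{-1}(N^i x),y\big]+N\big[N^i x,N\alpha^{-1}y\big]-N^2\big[\alpha^{-1}(N^i x),\alpha^{-1}y\big].
\end{gather*}
Then I would rewrite $N\alpha^{-1}N^i=N^i\cdot N\alpha^{-1}$ and apply the inductive hypothesis (the $(i,1)$ case, with arguments suitably shifted by powers of $\alpha^{-1}$) to the terms $[N^i x,N\alpha^{-1}y]$ and $[N^i(N\alpha^{-1}x\text{-type argument}),y]$, and collect. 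The point is that each bracket on the right has a strictly smaller value of $i$ in the "outermost" sense, so the hypothesis applies; after substituting and using $N\alpha=\alpha N$ to align the $\alpha$-shifts, the terms should telescope into the desired $(i+1,1)$ identity. The symmetric argument — peeling an $N$ off the second slot and using graded skew-symmetry of $[\cdot,\cdot]$ to reduce to the first-slot case — handles the induction on $j$.

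The main obstacle I anticipate is purely bookkeeping: keeping track of the precise powers of $\alpha^{-1}$ attached to each argument as one peels off factors of $N$. Each application of \eqref{Nijenhuis} introduces an $\alpha^{-1}$ in front of one argument, and when one then applies the inductive hypothesis (which itself carries $\alpha^{-i}$ and $\alpha^{-j}$ shifts), these must be composed correctly; a single misplaced exponent causes the telescoping to fail. The remedy is to use $N\alpha=\alpha N$ relentlessly to normalize all expressions to a standard form (say, all $\alpha$-powers pushed to act directly on $x$ or $y$) before collecting terms. I would therefore carry out the computation symbolically with explicit exponents rather than suppressing them. Once the $(i+1,1)$ and $(i,j+1)$ steps are verified, the lemma follows by the double induction, and — as the statement already indicates — the verification is routine enough that in the paper one may simply assert it follows "by straightforward computations."
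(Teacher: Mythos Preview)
Your induction scheme is correct and the sample step you sketch (from $(i,1)$ to $(i+1,1)$) goes through exactly as you indicate once one uses $N\alpha=\alpha N$ to commute the shifts; the paper itself offers no proof beyond the remark that the formula ``can be obtained by straightforward computations,'' so your plan is precisely the intended approach.
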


\begin{thm}
Let $N$ be a hom-Nijenhuis operator acting on a hom-Lie algebra $(V,[\cdot,\cdot],\alpha)$. Then for any polynomial $P(z)=\sum\limits_{i=0}^{n}c_{i}z^{i}$,
 $P(N):=\sum\limits_{i=0}^{n}c_{i}N\circ \stackrel{i}{\cdots}\circ N $ is a Nijenhuis operator, where $\circ $ is defined by~\eqref{hom-NR brackethalf}.
\end{thm}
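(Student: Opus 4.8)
The plan is to reduce the statement to the two-variable Nijenhuis identity of Lemma~\ref{nijenhuisthm} and then exploit the hom-right-symmetric (hence graded hom-Lie) structure on $C(V,V)$ established in Theorem~\ref{thm:composition} and Corollary~\ref{cor:hom-Lie}. Since the hom-Nijenhuis condition of Proposition~\ref{hom-Nijenhuis operator} is bilinear in the coefficients once $N$ is replaced by the relevant monomials, it suffices by linearity to prove that for each pair $i,j\ge 0$ the operator $N^{i}\circ N^{j}$ (meaning $N$ composed with itself, using $\circ$ from~\eqref{hom-NR brackethalf}) satisfies the defining Nijenhuis identity~\eqref{Nijenhuis} with $\mu$; summing over $i,j$ with the coefficients $c_ic_j$ then gives the claim for $P(N)$.

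First I would record the elementary facts: $\Ad_\alpha N=N$ implies $\Ad_\alpha(N^{i})=N^{i}$ by~\eqref{eq:right1}, so each power of $N$ is again an admissible candidate for a hom-Nijenhuis operator, and by~\eqref{N} in the proof of Proposition~\ref{hom-Nijenhuis operator} the bracket $\{N^{i},y\}_\alpha$ is just $-N^{i}(y)$ on $C^0(V,V)$. Next I would observe that, by Proposition~\ref{hom-Nijenhuis operator}, proving $P(N)$ is a hom-Nijenhuis operator is equivalent to verifying
\begin{gather*}
[P(N)x,P(N)y]=P(N)\big[P(N)\alpha^{-1}x,y\big]+P(N)\big[x,P(N)\alpha^{-1}y\big]-P(N)^{2}\big[\alpha^{-1}x,\alpha^{-1}y\big]
\end{gather*}
for all $x,y\in V$. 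Expanding $P(N)=\sum_i c_iN^{i}$ on both sides, the left-hand side becomes $\sum_{i,j}c_ic_j[N^{i}x,N^{j}y]$ and the right-hand side a corresponding double sum; matching the $(i,j)$-th terms, the required identity is precisely
\begin{gather*}
\big[N^{i}x,N^{j}y\big]=N^{i}\big[N^{j}\alpha^{-i}x,y\big]+N^{j}\big[x,N^{i}\alpha^{-j}y\big]-N^{i+j}\big[\alpha^{-j}x,\alpha^{-i}y\big],
\end{gather*}
after the harmless relabelling of the dummy indices and using that the double sum is symmetric in $i\leftrightarrow j$ so the two mixed terms combine correctly. But this is exactly (a rearrangement of) the content of Lemma~\ref{nijenhuisthm}, so the theorem follows.

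The only subtlety, and the place where I would be most careful rather than calling it routine, is bookkeeping the powers of $\alpha^{-1}$: in the definition of $P(N)$ via iterated $\circ$, one has $N\circ N=\Ad_\alpha N\circ \Ad_\alpha N$-type expressions, and by~\eqref{eq:right1} together with $\Ad_\alpha N=N$ these collapse so that $N\circ\stackrel{i}{\cdots}\circ N$, evaluated on elements, produces exactly the twist pattern $N^{i}[\alpha^{-\cdot}x,\alpha^{-\cdot}y]$ appearing in Lemma~\ref{nijenhuisthm}; one must check the exponents of $\alpha^{-1}$ on the two arguments match. I expect this index-tracking to be the main (though still mechanical) obstacle, after which no further argument is needed. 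Concretely I would write: by linearity of both sides of~\eqref{Nijenhuis} in each monomial, it is enough to prove it for $N^{i}$ and $N^{j}$ in place of the two occurrences of $N$; then
\begin{gather*}
\big[N^{i}x,N^{j}y\big]-N^{i}\big[x,N^{j}\alpha^{-i}y\big]-N^{j}\big[N^{i}\alpha^{-j}x,y\big]+N^{i+j}\big[\alpha^{-j}x,\alpha^{-i}y\big]=0
\end{gather*}
is Lemma~\ref{nijenhuisthm}, and summing against $c_ic_j$ shows $P(N)$ satisfies~\eqref{Nijenhuis}, i.e., $P(N)$ is a hom-Nijenhuis operator.
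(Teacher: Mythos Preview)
Your strategy—expand the Nijenhuis condition~\eqref{Nijenhuis} for $P(N)$ as a double sum and identify each $(i,j)$ summand with Lemma~\ref{nijenhuisthm}—is exactly the paper's. But the step you flag as mere ``bookkeeping'' is where the whole computation lives, and what you actually wrote is not correct. Since $\circ$ is the hom-composition~\eqref{hom-NR brackethalf} and $\Ad_\alpha N=N$, one has $N^{\circ i}=N^{i}\alpha^{-i+1}$, not $N^{i}$; consequently $(P(N))\circ(P(N))=\sum_{i,j}c_ic_jN^{i+j}\alpha^{-i-j+1}$. Substituting this into~\eqref{Nijenhuis} and using that $\alpha$ is a bracket morphism commuting with $N$, the $(i,j)$ term becomes Lemma~\ref{nijenhuisthm} evaluated at $x'=\alpha^{-i+1}x$, $y'=\alpha^{-j+1}y$; this is precisely the paper's computation.

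By contrast, your naive expansion with $P(N)=\sum_i c_iN^{i}$ produces only $\alpha^{-1}$ in the mixed terms, which does not match the lemma, and the first displayed identity you claim is ``a rearrangement'' of Lemma~\ref{nijenhuisthm} is in fact false: for $i=1$, $j=0$ and $\alpha=\mathrm{id}$ it reduces to $[Nx,y]=[x,Ny]$. The later sentence ``it is enough to prove it for $N^{i}$ and $N^{j}$ in place of the two occurrences of $N$'' is not a valid reduction either, since~\eqref{Nijenhuis} is quadratic in $N$ with several occurrences and one cannot simply replace different copies by different operators. So the route is right, but the $\alpha$-bookkeeping you postponed is the entire content of the proof; the paper carries it out explicitly.
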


\begin{proof}
By Lemma~\ref{nijenhuisthm}, we have
\begin{gather*}
 [P(N)x,P(N)y]-P(N)\big[P(N)\alpha^{-1}x,y\big]-P(N)\big[x,P(N)\alpha^{-1}y\big]+(P(N))^2\big[\alpha^{-1}x,\alpha^{-1}y\big] \\
 \qquad{}
 = \sum_{i,j=0}^{n}c_{i}c_{j}\big(\big[N^{i}\alpha^{-i+1}x,N^{j}\alpha^{-j+1}y\big]-N^{j}\alpha^{-j+1}\big[N^{i}\alpha^{-i}x,y\big] \\
\qquad\quad{} -N^{i}\alpha^{-i+1}\big[x,N^{j}\alpha^{-j}y\big]+N^{i+j}\alpha^{-i-j+2}\big[\alpha^{-1}x,\alpha^{-1}y\big]\big) \\
\qquad{}
 = \sum_{i,j=0}^{n}c_{i}c_{j}\big(\big[N^{i}\alpha^{-i+1}x,N^{j}\alpha^{-j+1}y\big]-N^{j}\big[N^{i}\alpha^{-i-j+1}x,\alpha^{-j+1}y\big] \\
\qquad\quad{}
 -N^{i}\big[\alpha^{-i+1}x,N^{j}\alpha^{-i-j+1}y\big]+N^{i+j}\big[\alpha^{-i-j+1}x,\alpha^{-i-j+1}y\big]\big)=0.
\end{gather*}
Therefore, $P(N)$ is a Nijenhuis operator.
\end{proof}

At the end of this section, we introduce a new def\/inition of a hom-$\huaO$-operator, which is a~generalization of an $\huaO$-operator introduced by Kupershmidt in~\cite{Kupershmidt2}.
\begin{defi}
Let $(V,[\cdot,\cdot],\alpha)$ be a hom-Lie algebra and $\rho\colon V\longrightarrow\gl(W)$ a representation of $(V,[\cdot,\cdot],\alpha)$ on~$W$ with respect to $\beta\in {\rm GL}(W)$. A linear map $T\colon W\rightarrow V$ is called a hom-$\huaO$-operator if $T$ satisf\/ies
 \begin{gather*}
 T\circ \beta = \alpha\circ T,\\
 [Tu,Tv]_{\frkg} = T\big(\rho\big(T\big(\beta^{-1}u\big)\big)v-\rho\big(T\big(\beta^{-1}v\big)\big)u\big).
 \end{gather*}
 \end{defi}

\begin{lem}
With the above notations, a linear map $T\colon W\rightarrow V$ is a hom-$\huaO$-operator if and only if
$\left(\begin{smallmatrix}0&T\\0&0\end{smallmatrix}\right)$ is a hom-Nijenhuis operator for the semidirect product hom-Lie algebra $V\ltimes_{\rho}W$.
\end{lem}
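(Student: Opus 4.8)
The plan is to write out explicitly what it means for $N:=\left(\begin{smallmatrix}0&T\\0&0\end{smallmatrix}\right)$ to be a hom-Nijenhuis operator on $V\ltimes_\rho W$ and to match the resulting identities against the definition of a hom-$\huaO$-operator. First I would fix the data of the semidirect product: its underlying space is $V\oplus W$, its bracket is $[(x,u),(y,v)]=([x,y],\rho(x)v-\rho(y)u)$, and its structure map is $\gamma:=\alpha\oplus\beta$. Since $\alpha\in{\rm GL}(V)$ and $\beta\in{\rm GL}(W)$, the map $\gamma$ is invertible with $\gamma^{-1}=\alpha^{-1}\oplus\beta^{-1}$; the representation axioms of Definition~\ref{repn}, together with $\alpha$ preserving $[\cdot,\cdot]$, ensure that $V\ltimes_\rho W$ is a hom-Lie algebra and that $\gamma$ is an automorphism of it, so $V\ltimes_\rho W$ is a \emph{regular} hom-Lie algebra and Proposition~\ref{hom-Nijenhuis operator} applies to it. Writing $N(x,u)=(Tu,0)$, a short computation shows that $\Ad_\gamma N=N$ holds if and only if $\alpha T\beta^{-1}=T$, that is, if and only if $T\circ\beta=\alpha\circ T$, the first defining condition of a hom-$\huaO$-operator.

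Next, since $N$ has image in $V\oplus\{0\}$ and annihilates that subspace, both the ordinary square $N^2$ and the hom-Nijenhuis--Richardson composite $N\circ N$ of~\eqref{hom-NR brackethalf} vanish. Hence, granting $\Ad_\gamma N=N$, Proposition~\ref{hom-Nijenhuis operator} reduces the hom-Nijenhuis condition for $N$ on $V\ltimes_\rho W$ to
\begin{gather*}
[N\zeta,N\xi]=N\big[N\gamma^{-1}\zeta,\xi\big]+N\big[\zeta,N\gamma^{-1}\xi\big],\qquad\forall\,\zeta,\xi\in V\oplus W.
\end{gather*}
Evaluating this on $\zeta=(x,u)$ and $\xi=(y,v)$, the $W$-component of both sides is automatically $0$ (the left side because $\rho(\cdot)0=0$, the right side because $N$ maps into $V\oplus\{0\}$), while the $V$-component reads
\begin{gather*}
[Tu,Tv]=T\big(\rho\big(T\big(\beta^{-1}u\big)\big)v-\rho\big(T\big(\beta^{-1}v\big)\big)u\big),
\end{gather*}
which is exactly the second defining condition of a hom-$\huaO$-operator.

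Combining the two paragraphs: $N$ is a hom-Nijenhuis operator for $V\ltimes_\rho W$ precisely when $T\circ\beta=\alpha\circ T$ (equivalent to $\Ad_\gamma N=N$) and $[Tu,Tv]=T(\rho(T(\beta^{-1}u))v-\rho(T(\beta^{-1}v))u)$ (equivalent, given the former, to the reduced Nijenhuis identity above), i.e., precisely when $T$ is a hom-$\huaO$-operator. The only places needing care are the preliminary verification that $V\ltimes_\rho W$ is a regular hom-Lie algebra, so that the notion of hom-Nijenhuis operator and Proposition~\ref{hom-Nijenhuis operator} are available, and the bookkeeping of $\alpha$, $\beta$, $\gamma$ and their inverses when specializing~\eqref{Nijenhuis} to $N$; I expect no genuine obstacle.
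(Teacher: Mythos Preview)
Your proposal is correct and is exactly the ``straightforward computation'' the paper alludes to (the paper's own proof consists of the single sentence ``By straightforward computations''). Your explicit handling of the regularity of $V\ltimes_\rho W$, the equivalence $\Ad_\gamma N=N\Leftrightarrow T\beta=\alpha T$, and the vanishing of both $N^2$ and $N\circ N$ fills in precisely the details the paper omits.
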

\begin{proof}
 By straightforward computations.
 \end{proof}

\begin{rmk}\label{rmk:O-operator}
 As in the case of hom-Nijenhuis operators, the above def\/inition of a hom-$\huaO$-operator is dif\/ferent from the one given in~\cite{sheng1}. Now our principle is that~$T$ is a hom-$\huaO$-operator if and only if $ \left(\begin{smallmatrix}0&T\\0&0\end{smallmatrix}\right)$ is a hom-Nijenhuis operator as the above lemma shows. Since the present def\/inition of a hom-Nijenhuis operator is dif\/ferent from the one given in \cite{sheng3} (see Remark~\ref{rmk:Nijenhuis}), it is reasonable that the def\/inition of a hom-$\huaO$-operator is also dif\/ferent from the old one. Recently, some applications of hom-$\huaO$-operators were given in~\cite{Zhang}.
\end{rmk}

 As in the classical case, a hom-$\huaO$-operator can give rise to a hom-right-symmetric algebra.
\begin{pro}
Let $(V,[\cdot,\cdot],\alpha)$ be a hom-Lie algebra and $\rho\colon V\longrightarrow\gl(W)$ a representation of $(V,[\cdot,\cdot],\alpha)$ on $W$ with respect to $\beta\in {\rm GL}(W)$.
Suppose that $T\colon W\rightarrow V$ is a hom-$\huaO$-operator. Then $(W,*,\beta)$ is a hom-right-symmetric algebra, where the multiplication $*$ is given by
\begin{gather*}
u*v=\rho\big(T\big(\beta^{-1}v\big)\big)(u),\qquad\forall \,u,v\in W.
\end{gather*}
\end{pro}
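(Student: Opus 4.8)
The plan is to verify the two hom-right-symmetric algebra axioms for $(W,*,\beta)$ directly, using the hom-$\huaO$-operator identity together with the defining properties of the representation $\rho$ (Definition~\ref{repn}) and the compatibility $T\circ\beta=\alpha\circ T$. First I would check that $\beta$ preserves the multiplication: since $T\beta=\alpha T$ gives $T\beta^{-1}=\alpha^{-1}T$, and $\rho(\alpha(x))\circ\beta=\beta\circ\rho(x)$, one computes $\beta(u)*\beta(v)=\rho(T(\beta^{-1}\beta(v)))\beta(u)=\rho(Tv)\beta(u)$, while $\beta(u*v)=\beta\rho(T\beta^{-1}v)(u)=\rho(\alpha T\beta^{-1}v)\beta(u)=\rho(Tv)\beta(u)$; so the two agree. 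This is the easy part.

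\medskip

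The main step is the hom-right-symmetry identity
\begin{gather*}
(u*v)*\beta(w)-\beta(u)*(v*w)=(u*w)*\beta(v)-\beta(u)*(w*v).
\end{gather*}
Expanding the left-hand side with the definition of $*$, the first term is $\rho(T(\beta^{-1}\beta(w)))(u*v)=\rho(Tw)\rho(T(\beta^{-1}v))(u)$, and the second term is $\rho(T(\beta^{-1}(v*w)))\beta(u)=\rho(T\beta^{-1}\rho(T\beta^{-2}w)v)\beta(u)$. Here I would use $T\beta^{-1}=\alpha^{-1}T$ to rewrite $T\beta^{-1}(v*w)$ and then invoke the hom-$\huaO$-operator relation to replace the awkward term $T(\rho(T\beta^{-1}\cdot)\cdot)$ by a bracket $[T\cdot,T\cdot]_\frkg$ up to the symmetric correction. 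The key algebraic input is that, after these substitutions, the combination
\begin{gather*}
(u*v)*\beta(w)-\beta(u)*(v*w)
\end{gather*}
becomes symmetric in $v$ and $w$ modulo the terms that are manifestly symmetric, precisely because the second relation in Definition~\ref{repn}, namely $\rho([x,y])\circ\beta=\rho(\alpha(x))\rho(y)-\rho(\alpha(y))\rho(x)$, converts the bracket $[Tv',Tw']_\frkg$ produced by the hom-$\huaO$-operator identity into a difference $\rho(\alpha Tv')\rho(Tw')-\rho(\alpha Tw')\rho(Tv')$ — and one of these two summands is exactly what appears on the right-hand side. So subtracting the right-hand side from the left-hand side should leave only the $\rho([\cdot,\cdot]_\frkg)\circ\beta$ piece balanced against the $\rho(\alpha(\cdot))\rho(\cdot)-\rho(\alpha(\cdot))\rho(\cdot)$ piece, which cancels by the representation axiom.

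\medskip

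I would organize the computation by first introducing shorthand $x=T\beta^{-1}v$, $y=T\beta^{-1}w$ (so that $Tv=\alpha(x)$, $Tw=\alpha(y)$ by the compatibility) and rewriting every occurrence of $T$ applied to a $\rho$-expression via the hom-$\huaO$-operator relation, carefully tracking the $\beta^{-1}$'s so that all arguments of $\rho$ land on the correct $\alpha$-twisted elements. The bookkeeping of these twists is where errors are most likely, so I would double-check each application of $T\beta^{-1}=\alpha^{-1}T$ and of the two covariance identities for $\rho$. Once the substitutions are in place, the cancellation is forced by the hom-Lie representation axioms, and no genuinely new identity is needed. The hard part will be purely organizational: keeping the $\alpha$'s and $\beta$'s in the right places so that the representation axiom $\rho([x,y])\circ\beta=\rho(\alpha x)\rho(y)-\rho(\alpha y)\rho(x)$ applies verbatim rather than in some twisted form that would require an extra lemma.
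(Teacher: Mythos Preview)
Your proposal is correct and follows essentially the same route as the paper: a direct verification of the two hom-right-symmetric axioms, using $T\beta=\alpha T$ together with the representation identities from Definition~\ref{repn} and the hom-$\huaO$-operator relation to reduce the associator difference to an instance of $\rho([x,y])\circ\beta=\rho(\alpha x)\rho(y)-\rho(\alpha y)\rho(x)$. The only cosmetic difference is that the paper does not introduce the shorthand $x=T\beta^{-1}v$, $y=T\beta^{-1}w$ but keeps all expressions in terms of $T$, $\beta$, $\rho$ throughout; either bookkeeping works.
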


\begin{proof}
By $ T\circ \beta=\alpha\circ T$ and the fact that $\rho$ is a representation, we have
\begin{gather*}
\beta(u*v)=\beta\big(\rho\big(T\big(\beta^{-1}v\big)\big)(u)\big)
=\rho\big(\alpha\big(T\big(\beta^{-1}v)\big)\big)( \beta(u))=\rho(T(v))(\beta(u))=\beta(u)*\beta(v),
\end{gather*}
which implies that $\beta$ is an algebra homomorphism. Furthermore, we have
\begin{gather*}
 (u*v)*\beta(w)-\beta(u)*(v*w)-(u*w)*\beta(v)+\beta(u)*(w*v) \\
 = \rho(T(w))\big(\rho\big(T\big(\beta^{-1}v\big)\big)(u))-\rho\big(T\big(\beta^{-1}\rho\big(T\big(\beta^{-1}w\big)\big)(v)\big)\big)(\beta(u)) \\
\quad{} -\rho(T(v))\big(\rho\big(T\big(\beta^{-1}w\big)\big)(u)\big)+\rho\big(T\big(\beta^{-1}\rho\big(T\big(\beta^{-1}v\big)\big)(w)\big)\big)(\beta(u)) \\
 = \rho T\big(\beta^{-1}\rho\big(T\big(\beta^{-1}v\big)\big)(w)-\beta^{-1}\rho\big(T\big(\beta^{-1}w\big)\big)(v)\big)(\beta(u)) \\
 \quad{} -\rho\big(\alpha\circ T\big(\beta^{-1}v\big)\big)\big(\rho\big(T\big(\beta^{-1}w\big)\big)(u)\big)+\rho\big(\alpha\circ T\big(\beta^{-1}w\big)\big)\big(\rho\big(T\big(\beta^{-1}v\big)\big)(u)\big) \\
 = \rho\alpha^{-1}\circ T\big(\rho\big(T\big(\beta^{-1}v\big)\big)(w)-\rho\big(T\big(\beta^{-1}w\big)\big)(v)\big)(\beta(u))
 -\rho\big(\big[T\big(\beta^{-1}v\big),T\big(\beta^{-1}w\big)\big]\big)(\beta(u)) \\
 = \rho\alpha^{-1}([T(v),T(w)])(\beta(u))-\rho\big(\big[T\big(\beta^{-1}v\big),T\big(\beta^{-1}w\big)\big] \big)(\beta(u)) \\
 = \rho\big(\big[\alpha^{-1}T(v),\alpha^{-1}T(w)\big] \big)(\beta(u))-\rho\big(\big[T\big(\beta^{-1}v\big),T\big(\beta^{-1}w\big)\big] \big)(\beta(u)) \\
 = \rho\big(\big[T\circ \beta^{-1}(v),T\circ \beta^{-1}(w)\big] \big)(\beta(u))-\rho\big(\big[T\big(\beta^{-1}v\big),T\big(\beta^{-1}w\big)\big] \big)(\beta(u))
=0.
\end{gather*}
Therefore, $(W,*,\beta)$ is a hom-right-symmetric algebra.
\end{proof}

\subsection*{Acknowledgements}

We give our warmest thanks to the editor and referees for very useful comments that improve the paper.
This research is supported by NSFC (11101179, 11471139) and NSF of Jilin Province (20140520054JH).

\pdfbookmark[1]{References}{ref}
\LastPageEnding

\end{document}